\documentclass[runningheads,a4paper]{llncs}


\usepackage{amssymb}
\usepackage{amsmath}
\usepackage{graphicx}
\usepackage{algpseudocode}
\usepackage[T1]{fontenc}
\usepackage[american]{babel}
\usepackage[utf8x]{inputenc}
\usepackage{dsfont}
\usepackage{todonotes}
\usepackage{xspace}
\usepackage{url}
\usepackage{wrapfig}
\usepackage{subfigure}
\usepackage{booktabs}
\usepackage{longtable}
\usepackage[pdfborder={0 0 0}]{hyperref} 

\graphicspath{{./fig/}}

\let\emptyset\varnothing

\newcommand{\old}[1]{{}}
\newcommand{\ignore}[1]{}

\newcommand{\keywords}[1]{\par\addvspace\baselineskip
\noindent\keywordname\enspace\ignorespaces#1}

\newcommand{\MinBound}{the MPP\xspace}

\newcommand{\bbC}{\mathcal{C}\xspace}

\begin{document}

\title{Computing Nonsimple Polygons\\ of Minimum Perimeter}

\author{S\'andor P.\ Fekete\inst{1}
   \and Andreas Haas\inst{1}
   \and Michael Hemmer\inst{1}
   \and Michael Hoffmann\inst{2}
   \and Irina Kostitsyna\inst{3}
   \and Dominik Krupke\inst{1}
   \and Florian Maurer\inst{1}
   \and Joseph S.\ B.\ Mitchell\inst{4}
   \and Arne Schmidt\inst{1}
   \and Christiane Schmidt\inst{5}
   \and Julian Troegel\inst{1}
}
\authorrunning{Fekete et al.}

\institute{
  TU Braunschweig, Germany.
  \and ETH Zurich, Switzerland.
  \and TU Eindhoven, the Netherlands.
  \and Stony Brook University, USA.
  \and Link\"oping University, Sweden.
}

\maketitle

\begin{abstract}
We provide exact and approximation methods for solving a geometric relaxation
of the Traveling Salesman Problem (TSP) that occurs in curve reconstruction:
for a given set of vertices in the plane, the problem Minimum Perimeter Polygon
(MPP) asks for a (not necessarily simply connected) polygon with shortest possible
boundary length. Even though the closely related problem of finding a minimum cycle
cover is polynomially solvable by matching techniques, we prove how the
topological structure of a polygon leads to NP-hardness of the MPP. On the
positive side, we show how to achieve a constant-factor approximation.

When trying to solve MPP instances to provable optimality by means of integer
programming, an additional difficulty compared to the TSP is the fact that only
a subset of subtour constraints is valid, depending not on combinatorics, but
on geometry. We overcome this difficulty by establishing and exploiting
additional geometric properties. This allows us to reliably solve a wide range
of benchmark instances with up to 600 vertices within reasonable time on a
standard machine. We also show that using a natural geometry-based
sparsification yields results that are on average within 0.5\% of the optimum.
\end{abstract}


\keywords{Traveling Salesman Problem (TSP); Minimum Perimeter Polygon (MPP); curve reconstruction; NP-hardness; exact optimization; integer programming;
Computational Geometry meets Combinatorial Optimization}

\section{Introduction}
\label{sec:introduction}

For a given set $V$ of points in the plane, the Minimum Perimeter Polygon (MPP) asks for a
polygon $P$ with vertex set $V$ that has minimimum possible boundary length. An optimal solution
may not be simply connected, so we are faced with a geometric relaxation of the Traveling Salesman Problem (TSP).

The TSP is one of the classic problems of Combinatorial Optimization.
NP-hard even in special cases of geometric instances (such as grid graphs), it has served as one of the 
prototypical testgrounds for developing outstanding algorithmic approaches.
These include constant-factor approximation methods (such as Christofides' 3/2 approximation~\cite{christofides}
in the presence of triangle inequality, or Arora's~\cite{arora} and Mitchell's~\cite{mitchell}
polynomial-time approximation schemes for geometric instances), as well as exact methods
(such as Gr\"otschel's optimal solution to a 120-city instance~\cite{Groetschel1980a} or the award-winning
work by Applegate, Bixby, Chvatal and Cook~\cite{abcc} for solving a 13509-city instance within 10 years of CPU time.) 
The well-established benchmark library TSPLIB~\cite{reinelt1991tsplib} of TSP instances has become so
widely accepted that it is used as a benchmark for a large variety of other optimization problems.
See the books~\cite{tsp-book,tsp-book2} for an overview of various aspects of the TSP
and the books~\cite{tspstudy,tsppursuit} for more details on exact optimization.

\begin{figure}[t!hp]
        \begin{center}
                \vspace*{-0.5cm}
                \includegraphics[width=1.0\textwidth]{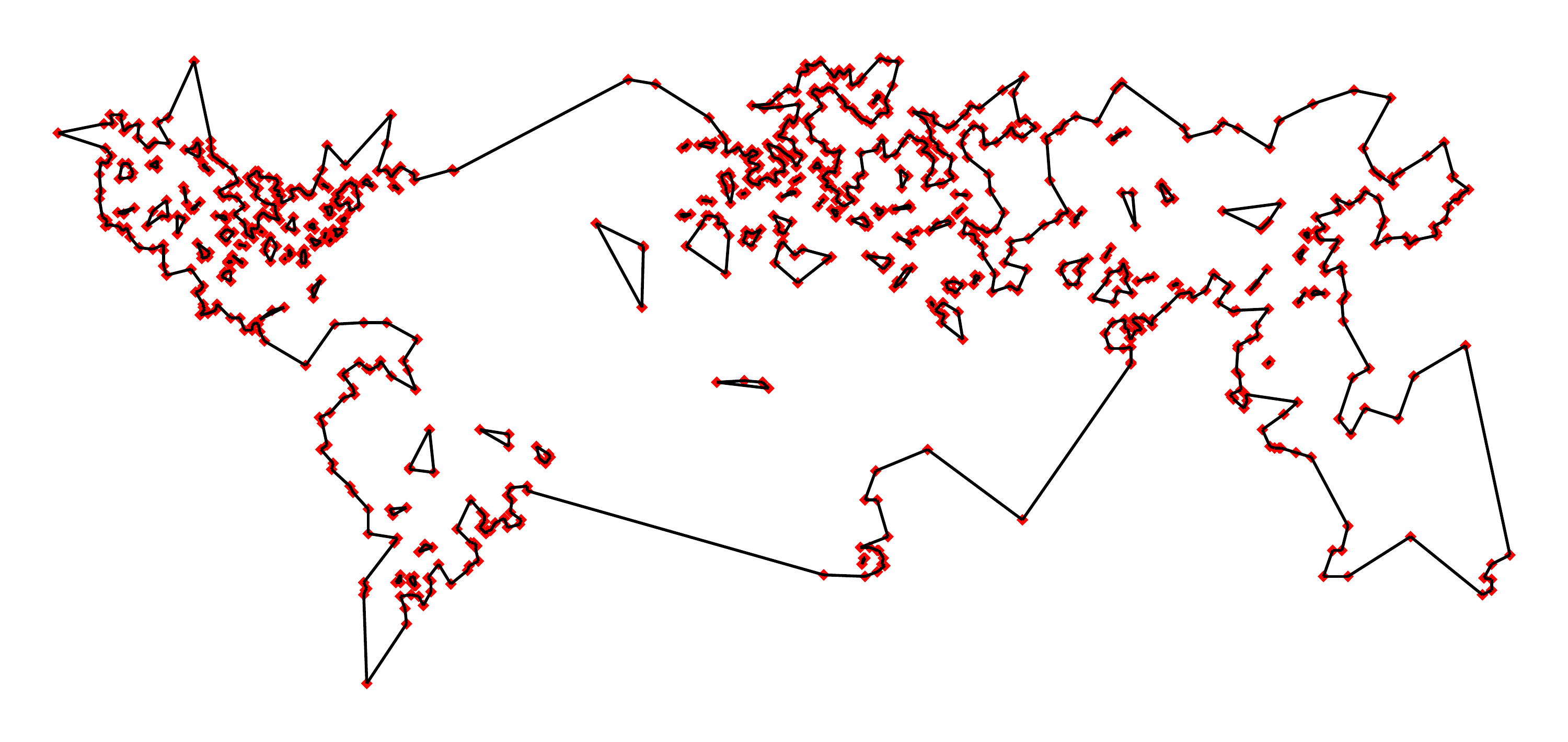}
                \vspace*{-.7cm}
                \caption{A Minimum Perimeter Polygon for an instance with 960 vertices.}
                \vspace*{-.7cm}
                \label{fig:960instance}
        \end{center}
\end{figure}

From a geometric point of view, the TSP asks for the shortest polygonal chain through a given set of 
vertices in the plane; as a consequence of triangle inequality, the result is always a simple polygon of minimum
perimeter. Because of the fundamental role of polygons in geometry, this has made the study
of TSP solutions interesting for a wide range of geometric applications. One such context is geometric shape reconstruction,
where the objective is to re-compute the original curve from a given set of sample points; see Giesen~\cite{giesen},
Althaus and Mehlhorn~\cite{altmehl} or
Dey, Mehlhorn and Ramos~\cite{dots} for specific examples. However, this only makes sense when the original shape
is known to be simply connected, i.e., bounded by a single closed curve. More generally, a shape may be
multiply connected, with interior boundaries surrounding holes. In that case, computing a simple polygon
does not yield the desired answer. Instead, the solution may be a Minimum Perimeter Polygon (MPP):
given a set $V$ of points in the plane, find a not necessarily simple polygon $P$ with vertex set $V$,
such that the boundary of $P$ has smallest possible length\footnote{Note that we exclude degenerate holes that consist
of only one or two vertices.} See Figure~\ref{fig:960instance} for an optimal solution
of an instance with 960 points; this also illustrates the possibly intricate structure of an MPP.

While the problem MPP\footnote{For simplicity, we will also refer to the problem of computing an MPP as ``the MPP''.}
asks for a cycle cover of the given set of vertices (as opposed to the single cycle required by the TSP), 
it is important to note that even the more general geometry of a polygon with holes 
imposes some topological constraints on the structure of boundary cycles; as a consequence,
an optimal 2-factor (a minimum-weight cycle cover of the vertices, which can be computed in
polynomial time) may not yield a feasible solution. 
Fekete et al.~\cite{bff+-corsct-15} gave a generic
integer program for the MPP (and other related problems) that was able to yield optimal solutions for instances up to 50 vertices.
However, the main challenges were left unresolved. 
What is the complexity of computing an MPP? Is it possible to 
develop constant-factor approximation algorithms? And how can we compute provably optimal 
solutions for instances of relevant size?


\subsection*{Our Results}

In this paper, we resolve the main open problems related to the MPP.

\begin{itemize}
	\item
	We prove that \MinBound is NP-hard. This shows that despite of the relationship to the polynomially solvable problem of finding a minimum 2-factor, dealing with the topological structure of the involved cycles is computationally difficult.
	\item
	We give a 3-approximation algorithm.
	\item 
	We provide a general IP formulation with $O(n^2)$ variables to ensure a valid polygonal arrangement for \MinBound.
	\item
	We add additional cuts to reduce significantly the number of cuts needed to eliminate outer components and  holes in holes, leading to a practically useful formulation. 
	\item
	We present experimental results for \MinBound, solving instances with up to 1000 points in the plane to provable optimality within 30 minutes of CPU time.
	\item
	We also consider a fast heuristic that is based on geometric structure, restricting the edge set to the Delaunay triangulation.
	Experiments on structured random point sets show that solutions are on average only about 0.5\% worse than the optimum, with vastly superior runtimes.
\end{itemize}

\section{Complexity}
\label{sec:nphard}

\begin{theorem}
\label{th:nphard}
The MPP problem is NP-hard.
\end{theorem}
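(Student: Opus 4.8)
The plan is to reduce, in polynomial time, from a planar and geometrically structured NP-hard problem that comes with a built-in planar embedding --- a convenient choice is Hamiltonicity in grid graphs, though one could equally start from planar 3-SAT --- using a gadget construction that places the points of $V$ in the plane. The guiding principle is that a minimum-weight Euclidean 2-factor of $V$ can be computed in polynomial time, so the perimeter of any polygon is always at least the weight $w^\ast$ of such a 2-factor; hence the hardness of the MPP can only stem from the topological conditions that distinguish an arbitrary 2-factor from the boundary of a legal polygon with holes. These are that exactly one of the cycles is outermost, that every other cycle is a hole nested \emph{directly} inside it, and that no cycle lies in the interior of a hole (the ``no outer components'' and ``no holes in holes'' conditions). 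The reduction will be engineered so that a minimum 2-factor $M$ of the constructed instance has weight $w^\ast$ slightly below the target bound $B$ but is \emph{not} a legal polygon, and so that the cheapest way of repairing $M$ into a legal polygon --- by locally merging or rerouting its cycles --- costs at most $B-w^\ast$ if and only if the source instance is a yes-instance.

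Concretely, I would build one local gadget per edge and per vertex of the input graph (or per variable and per clause), each consisting of a cluster of points laid out at a minuscule scale $\varepsilon$, with distinct gadgets placed at mutual distance $\Omega(1)$, i.e., huge compared with $\varepsilon$. By the triangle inequality, the cheapest way for any cycle to visit a densely spaced, nearly collinear chain of points is to traverse it monotonically, so any backtrack within a gadget or any jump between two non-adjacent gadgets incurs an additional cost of $\Omega(1)$. This scale separation pins down, up to a negligible $O(\varepsilon)$ term, the combinatorial type of every polygon whose perimeter is below $B$: it must route through the gadgets along a constant number of prescribed ``tubes,'' leaving only a discrete choice inside each gadget (which tube is used, i.e.\ the truth value of a variable, or which pair of neighbouring cycles is merged). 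The vertex/clause gadgets are then wired so that a globally consistent choice produces a polygon whose boundary is one outer cycle together with the intended holes and has length exactly $B$, whereas any inconsistent choice produces a second outermost component or a cycle trapped inside a hole, and repairing this forces a detour of length $\Omega(1)$, which exceeds the available slack. The footnote restriction that holes have at least three vertices is used to forbid cheap ``legalizations'' that would collapse an offending cycle to a degenerate one.

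It remains to prove correctness in both directions. The forward direction is routine: from a solution of the source instance one follows the corresponding choice through all gadgets and checks that the resulting cycle system is a legal polygon with holes of perimeter at most $B$. The reverse direction --- any polygon of perimeter at most $B$ yields a yes-instance --- is where the real work lies, and I expect it to be the main obstacle. One must show that the perimeter bound forces the canonical routing through every single gadget (the scale-separation estimates have to hold uniformly, independently of the instance size), read off the discrete choices, and then invoke the ``single outer component'' and ``no holes in holes'' conditions to argue that these choices satisfy every constraint of the source instance. The delicate part is ruling out all adversarial optima: shortcuts between non-adjacent gadgets, coverings that split into several disjoint cycles which happen to be nestable into a legal polygon within the budget, and exploitation of almost-degenerate holes. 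Choosing the gadget geometry, $\varepsilon$, and the gadget separation so that each such ``cheat'' provably costs more than the slack $B-w^\ast$ --- simultaneously and for every instance --- is the crux of the proof; the remainder is bookkeeping.
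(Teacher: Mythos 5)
Your plan correctly identifies the right source of hardness --- since a minimum-weight 2-factor is polynomial, hardness must live in the topological side conditions (a unique outer cycle, no nesting of holes) --- and the right general mechanism: a gadget reduction from a planar NP-hard problem with scale separation so that a small perimeter budget pins down the combinatorial type of the routing. Both of these ideas are indeed present in the paper, which reduces from Planar Minimum Vertex Cover and uses a vertex gadget (four points, a binary choice about whether a fourth point $p$ joins a tiny cycle), an edge gadget (a chain of rhombi that can ``propagate'' coverage from one endpoint, but charges extra if both endpoints must be left uncovered), a split gadget to fan a vertex gadget out to degree $>2$, an enclosing triangle for the outer boundary, and small triplet-holes placed inside every face of $G$ precisely to forbid one cycle from threading through an edge gadget and enclosing a face --- exactly the kind of ``adversarial optimum'' you flag as needing to be excluded.

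However, what you have written is a plan for a proof, not a proof. You never specify a single gadget, never fix the geometry, never pick the constants, and never write the target bound $B$ as a function of the source instance. You yourself say that ``the reverse direction \emph{is where the real work lies},'' that ruling out adversarial optima ``is the crux of the proof,'' and that the remainder is ``bookkeeping'' --- but none of that work is done, and it is not bookkeeping. In particular, the central difficulty you correctly anticipate --- that an illegal cycle system might nevertheless be cheaply ``repairable'' into a legal polygon, or that cycles might span several gadgets --- requires concrete geometric gadgets together with a careful hierarchy of scales (the paper uses $\varepsilon \ll b \ll a \ll d$, with three distinct length scales whose relative sizes carry information, not just an $\varepsilon$ vs.\ $\Omega(1)$ split) and a budget expression in which the cost of the vertex-cover choices is isolated in its own scale so it cannot be traded against anything else. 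Without the gadgets and the scale hierarchy, the implication ``perimeter $\le B$ $\Rightarrow$ yes-instance'' is not established. So the proposal is a reasonable roadmap that points in the same direction as the paper, but as a proof it has a genuine and essential gap: the entire construction, which is where the NP-hardness actually gets encoded.
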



\begin{proof}
The proof is based on a reduction from the Minimum Vertex Cover problem for planar graphs:
for an undirected planar graph $G=(V,E)$ and a constant $k$, decide whether
there exists a subset of vertices $V'\subset V$ of size $k=|V'|$ such that for
any edge $(u,v)\in E$, either $u\in V'$ or $v\in V'$.
Given an instance $I_{\text{MVC}}$ of the Minimum Vertex Cover problem we
construct an instance $I_{\text{MB}}$ of the \MinBound problem such that
$I_{\text{MB}}$ has a solution if and only if $I_{\text{MVC}}$ has a solution. 
Given a planar graph $G$, we replace its vertices with vertex gadgets, connect
them with edge gadgets, and add three points at the vertices of a large
triangle enclosing the construction. The triangle will delimit the outer
boundary of the polygon in the instance of the \MinBound problem, and the
vertex and edge gadgets will enforce a choice of cycles covering the points
that will form the holes of the polygon.

\paragraph{Vertex gadget.}
The vertex gadget consists of four points (refer to Figure~\ref{fig:vertex-gadget}). The top three points are always connected by a cycle. If the fourth point $p$ is in the same cycle, that represents putting the corresponding vertex in subset $V'$. The cycle's length is $3\varepsilon$ if the vertex is not in $V'$, and $2b+2\varepsilon$ if the vertex is in $V'$.

\begin{figure}[ht]
\centering
\includegraphics[page=2]{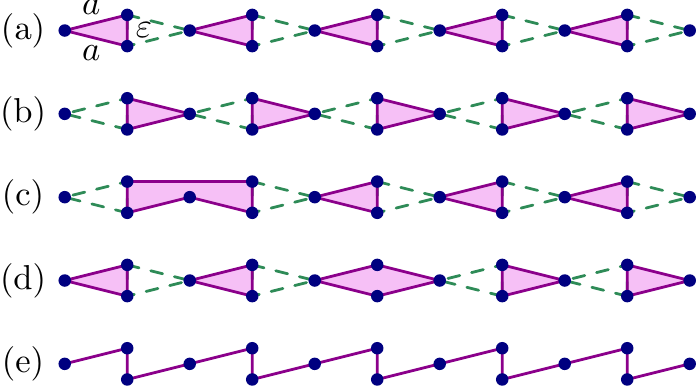}
\caption{Vertex gadget. Left: vertex $\in V'$, total length is $2b+2\varepsilon$; right: vertex $\notin V'$, total length is $3\varepsilon$.}
\label{fig:vertex-gadget}
\end{figure}

\paragraph{Edge gadget.}
The edge gadget consists of a repeating pattern of four points forming a rhombus (refer to Figure~\ref{fig:edge-gadget}). Let some edge gadget consist of $r$ rhombi. There are three ways of covering all the points except for, possibly, the two outermost points, with cycles of total length not greater than $2ra+r\varepsilon$ (see Figure~\ref{fig:edge-gadget} (a-c)). This will leave either the leftmost point, either the rightmost point, or both, the leftmost and the rightmost points, uncovered by the cycles. If we require both outermost points to be covered by the cycles, their total length will be at least $2(r+1)a+(r-1)\varepsilon$ (see Figure~\ref{fig:edge-gadget} (d)). The points of the edge gadget could potentially be covered by a path of length $2ra+r\varepsilon$ (see Figure~\ref{fig:edge-gadget} (e)) that closes into a cycle through other gadgets. To prevent this situation we will add triplets of points that will form small holes in the middle of each face of $G$. A cycle that passes through an edge gadget would enclose at least one face of graph $G$, thus would enclose another hole.

\begin{figure}[ht]
\centering
\includegraphics[page=1]{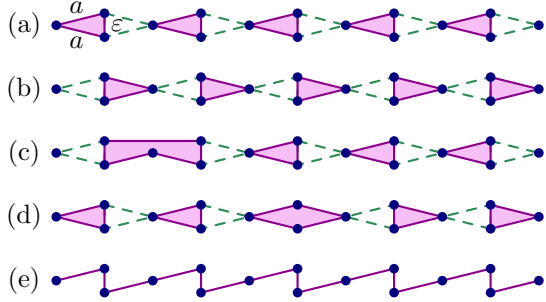}
\caption{Edge gadget. (a)--(c) the gadget is covered by cycles of total length $\approx 10a+5\varepsilon$; (d) total length $12a+4\varepsilon$; (e) the gadget is covered by a path of total length $10a+5\varepsilon$.}
\label{fig:edge-gadget}
\end{figure}

\paragraph{Split gadget.}
Split gadget (refer to Figure~\ref{fig:split-gadget}) multiplies the connection to a vertex gadget, thus allowing us to connect one vertex gadget to multiple vertex gadgets. If point $p$ is covered by the vertex gadget, all the points, including points $p_1$ and $p_2$, of the split gadget can be covered by cycles of total length $16a+11\varepsilon$. If point $p$ is not covered by the vertex gadget, $p$ and all the points of the split gadget, except for $p_1$ and $p_2$, can be covered by cycles of total length $16+11\varepsilon$. Notice, that the cycles can only consist of the edges that are shown in the figure (with solid or dashed lines). There is always the same number of edges used in any collection of cycles that cover the same number of points. Therefore, if some cycle contains an edge that is longer than $a$, the other edges in the cycles will have to be shorter to compensate for the extra length. By a simple case distinction one can show that there is no collection of cycles of length not greater than $16+11\varepsilon$ that covers the same points of the split gadget and that uses any edge that is not shown in Figure~\ref{fig:split-gadget}.

If we require the split gadget to cover points $p_1$ and $p_2$ when point $p$ is not covered by the vertex gadget, the total length of the cycles will be at least $18a+10\varepsilon$ (see Figure~\ref{fig:split-gadget-bad}).

\begin{figure}[ht]
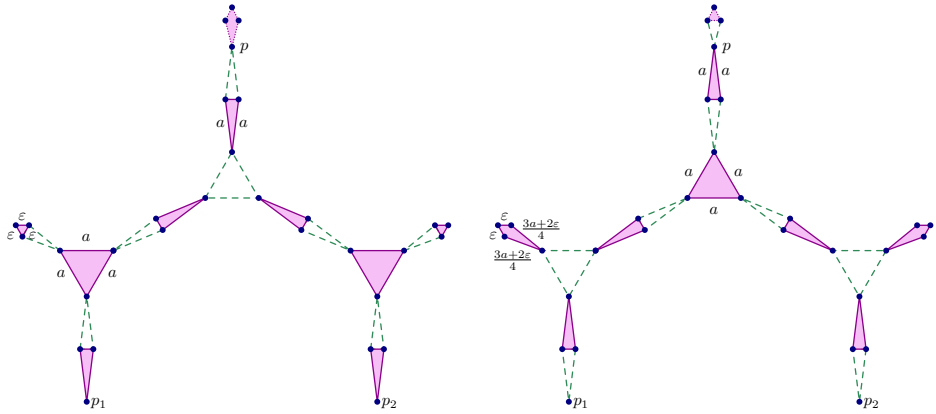

\centering
\includegraphics[page=3,width=0.48\textwidth]{np-hard}
\hfill
\includegraphics[page=4,width=0.48\textwidth]{np-hard}
\caption{Split gadget. Left: vertex $\in V'$, total length is $16a+11\varepsilon$; right: vertex $\notin V'$, total length is $16a+11\varepsilon$.}
\label{fig:split-gadget}
\end{figure}

\begin{figure}[ht]
\centering
\includegraphics[page=5,width=0.48\textwidth]{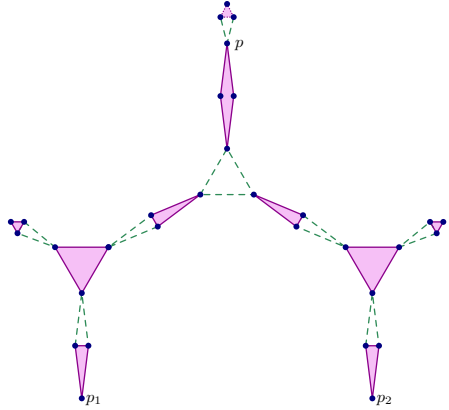}
\caption{Split gadget: points $p$, $p_1$, and $p_2$ are covered by cycles of total length $18a+10\varepsilon$.}
\label{fig:split-gadget-bad}
\end{figure}

To summarize, given an embedding of planar graph $G=(V,E)$ with $n$ vertices and $m$ edges, we construct an instance of \MinBound problem by replacing the vertices of the graph with the vertex gadgets, attaching $deg(v)-1$ split gadgets (where $deg(v)$ denotes the degree of vertex $v$) to the corresponding vertex gadget of every vertex $v$, and connecting the vertex gadgets by edge gadgets (see Figure~\ref{fig:np-hard}). We enclose the construction in a triangle of a very large size, that will be the outer boundary of the polygon. Let the perimeter of the triangle $T \gg$ than the diameter of $G$. The cycles covering the points of the gadgets will be the holes in the polygon. Moreover, to every face of $G$ we add triplets of points forming cycles of a very small length $\ll\varepsilon$. This will eliminate any possibility of passing through edge gadgets with a single cycle.

\begin{figure}[ht]
\centering
\includegraphics[page=6]{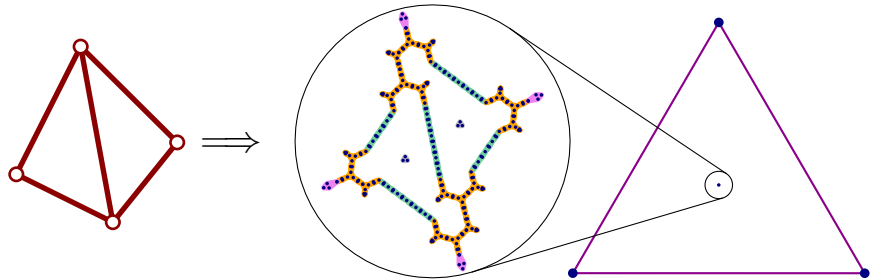}
\caption{Given a planar graph $G$, we construct an instance of \MinBound problem. Highlighted in violet are the vertex gadgets, in orange are the split gadgets, in green are the edge gadgets, and in gray are the extra holes in the middle of the faces of $G$.}
\label{fig:np-hard}
\end{figure}

The number of vertex gadgets used in the construction is $n$, and the number of split gadgets is $\sum_{v\in V} deg(v)-n=2m-n$. Let the number of rhombi used in all the edge gadgets be $r$, and let the total length of the extra holes in the middle of the faces of $G$ be $\varepsilon$. Then the instance of the \MinBound problem will ask whether there exists a polygon of perimeter not greater than
\begin{multline*}
L=T+k(2b+2\varepsilon)+(n-k)3\varepsilon+(2m-n)(16a+11\varepsilon)+2ra+r\varepsilon+\varepsilon=\\
(2b-\varepsilon)k+T+2(16 m-8 n+r)a+(22 m-8 n+r+1)\varepsilon\,.
\end{multline*}
Let $d$ be the length of the shortest edge. Choose $a$, $b$, and $\varepsilon$, such that $\varepsilon\ll b\ll a\ll d$. 
Then, there is a polygon with perimeter at most $L$ if and only if there is a vertex cover of size $k$ in the instance of Minimum Vertex Cover problem.

Let $V'$ be a vertex cover of size $k$ of $G=(V,E)$. Then, by selecting the corresponding vertex gadgets to cover points $p$, and propagating the construction of cycles along the split and edge gadgets, we get a polygon of perimeter $L$.

Let there exist a polygon $P$ with perimeter not greater than $T+2(16 m-8n+r)a+2kb+(22 m-8 n+r-k+1)\varepsilon$. By construction, the outer boundary of $P$ will be the triangle of perimeter $T$. Suppose there are more than $k$ vertex gadgets that are covering the corresponding points $p$. Then the perimeter of $P$ has to be greater than $T+2(16 m-8n+r)a+2kb+(22 m-8 n+r-k+1)\varepsilon$, as the third term (of variable $b$) of the perimeter expression dominates the fourth term (of variable $\varepsilon$). Thus, there has to be not more than $k$ variable gadgets that cover the corresponding points $p$. Every edge gadget has to have one of the end-points covered by the vertex gadgets (through split gadgets). Otherwise, the second term of the expression of the polygon perimeter would be greater. Therefore, the polygon corresponds to a vertex cover of the Minimum Vertex Cover instance of size not greater than $k$.
\qed
\end{proof}

\section{Approximation}
\label{sec:approx}

In this section we show that \MinBound can be approximated within
a factor of 3. 

\begin{theorem}
\label{thm:approx}
There exists a polynomial time $3$-approximation for \MinBound.
\end{theorem}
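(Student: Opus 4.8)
The plan is to build the approximate polygon from two classical lower bounds on $\mathrm{OPT}$ — a minimum-weight $2$-factor of $V$ and the convex hull $\conv(V)$ — and then to repair the topology into that of a valid polygon with holes while losing only a bounded factor.

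\textbf{Lower bounds.} The boundary of any feasible polygon $P$ on vertex set $V$ consists of one outer cycle together with the hole boundaries: pairwise non-crossing simple cycles that jointly cover $V$. In particular it is a $2$-factor of the complete graph on $V$, so $\mathrm{OPT}\ge W_{\mathrm{cc}}$, where $W_{\mathrm{cc}}$ is the weight of a minimum-weight $2$-factor, which is computable in polynomial time by a matching-based algorithm (Tutte's reduction to perfect matching). Moreover the outer cycle of $P$ is a closed curve enclosing all of $V$, hence no shorter than $\mathrm{per}(\conv(V))$, so $\mathrm{OPT}\ge \mathrm{per}(\conv(V))$.

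\textbf{Construction.} Compute a minimum-weight $2$-factor $\mathcal C$ of $V$. Under general position $\mathcal C$ has no geometric crossings at all: a self-crossing of a cycle, or a crossing between two cycles, could be removed by swapping the two crossing segments for the uncrossed pair, yielding another $2$-factor of strictly smaller weight (triangle inequality at the crossing point) — a contradiction. Hence the cycles of $\mathcal C$ are simple and pairwise non-crossing, and their interiors form a laminar family. Now assemble the outer boundary: start with the cycle bounding $\conv(V)$; every hull vertex lies on some cycle of $\mathcal C$, and any such cycle can be spliced into the current outer boundary at a shared hull vertex $v$ by a ``slit'' — delete one boundary edge and one cycle edge incident to $v$ and join the two freed endpoints. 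By the triangle inequality this adds at most the weight of the spliced cycle, and, being a slit, it keeps that cycle's interior part of the polygon, so nothing nested inside it is lost. Absorbing every hull-incident cycle this way gives an outer boundary that contains all hull vertices and has weight at most $\mathrm{per}(\conv(V))+W_{\mathrm{cc}}$. The remaining cycles of $\mathcal C$ must then be made into holes at nesting depth exactly $1$: whenever one cycle still lies inside another, cut the topology open from the outside inward — slit an offending enclosing cycle to the current outer boundary, or merge a cycle into its immediate container — always rerouting through a closest pair of edges, repeating until the surviving cycles are pairwise non-nested and disjoint from the outer boundary; cycles of fewer than three vertices are merged away the same way. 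The result is a valid polygon with holes, of perimeter at most $\mathrm{per}(\conv(V))+W_{\mathrm{cc}}+(\text{rerouting overhead})\le 2\,\mathrm{OPT}+(\text{rerouting overhead})$.

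\textbf{Main obstacle.} The two lower bounds, the observation that a minimum $2$-factor is already non-crossing, and the free slitting of hull-incident cycles together deliver a factor close to $2$; the heart of the proof is to bound the rerouting overhead from flattening the laminar family to depth-one nesting by one more copy of $\mathrm{OPT}$. A careless reroute that drags a deeply nested cycle all the way to the boundary can multiply the weight, so the reroutes must be performed outermost-first, and — the key point — the total overhead has to be charged against $\mathrm{OPT}$ itself: every feasible polygon, the optimum included, must let its outer boundary reach into each region that (in $\mathcal C$) encloses a smaller cycle, and that unavoidable ``reaching-in'' length dominates the overhead our algorithm incurs. Turning this intuition into a clean inequality, together with the routine verification that no slit or merge ever introduces a crossing or leaves a hole with fewer than three vertices, is where the real content of the proof lies.
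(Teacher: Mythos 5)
Your proposal correctly identifies the two lower bounds ($\mathrm{per}(\conv(V))$ and the weight of a minimum $2$-factor) and correctly observes that a Euclidean minimum $2$-factor is non-crossing, so its cycles form a laminar family. But you then stop precisely at the step that carries all the weight of the theorem: you write that bounding the ``rerouting overhead'' from flattening the laminar family to depth one is ``where the real content of the proof lies'' and offer only the intuition that it can be charged against $\mathrm{OPT}$ because the optimum's outer boundary must ``reach into'' each region that encloses a nested cycle. That charging scheme does not hold up. The optimum's cycle structure is unrelated to the structure of your $2$-factor $\mathcal C$: the optimum may place no boundary at all near the regions where $\mathcal C$ happens to nest cycles, so there is no quantity of $\mathrm{OPT}$ to charge the reroute against. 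As stated, the proposal is not a proof but a reduction to the open step.

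The paper closes exactly this gap with a different and fully concrete argument. First, it computes the $2$-factor only on the interior points $U$ (not on all of $V$), which removes your splicing-at-hull-vertices step entirely: the convex hull boundary \emph{is} the outer cycle, and every cycle of $\gamma(U)$ will become a hole. Second, and this is the missing idea, the nesting is flattened bottom-up using geodesics: for a node $v$ with children but no grandchildren, pick an edge $pq$ of $P_v$ and let $\Gamma$ be the shortest path from $p$ to $q$ inside $P_v$ that keeps all the children $P_{u_1},\dots,P_{u_k}$ on one side (a ``taut string'' / relative-convex-hull path). Replace $\partial P_v$ by the closed walk $\Gamma$ followed by the remainder of $\partial P_v$. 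This produces a weakly simple polygon $Q_v$ whose interior is disjoint from all children, and since $\Gamma$ is no longer than the boundary arc it replaces (it is a homotopic shortening), the new walk has length at most $2\,|\partial P_v|$. Each original cycle is the ``parent'' in such a step at most once, so after the whole pass the total length is at most $2\,|\gamma(U)|$ --- a clean, local $2\times$ bound that does not require comparing against $\mathrm{OPT}$'s topology at all. A second phase turns the weakly simple cycles into simple disjoint ones by local shortcutting, which only decreases length. The final polygon has perimeter at most $|\conv(V)| + 2|\gamma(U)| \le 3\,\mathrm{OPT}$. If you want to salvage your route, you should replace the hand-waved charging argument by this geodesic doubling lemma; that is the piece your proposal is missing.
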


\begin{proof}
  Let $OPT$ be the length of an optimal
  solution of \MinBound and $APX$ the length of the approximation that
  our algorithm will compute for the given set, $V$, of $n$ points in
  the plane.  
We compute the convex hull, $CH(V)$, of the input set; this takes time
$O(n \log h)$, where $h$ is the number of vertices of the convex hull.
Note that the perimeter, $|CH(V)|$, of the convex hull is a lower
bound on the length of an optimal solution ($OPT\geq |CH(V)|$), since
the outer boundary of any feasible solution polygon must enclose all
points of $V$, and the convex hull is the minimum-perimeter enclosure
of~$V$.

Let $U\subseteq V$ be the input points interior to $CH(V)$.  If
$U=\emptyset$, then the optimal solution is given by the convex hull.
If $|U|\leq 2$, we claim that an optimal solution is a simple
(nonconvex) polygon, with no holes, on the set $V$, given by the TSP
tour on $V$; since $|U|=2$ is a constant, it is easy to compute the
optimal solution in polynomial time, by trying all possible ways of
inserting the points of $U$ into the cycle of the points of $V$ that
lie on the boundary of the convex hull, $CH(V)$.

Thus, assume now that $|U|\geq 3$.  We compute a minimum-weight
2-factor, denoted by $\gamma(U)$, on $U$, which is done in polynomial-time by
standard methods~\cite{bills-book}.  Now, $\gamma(U)$ consists of a
set of disjoint simple polygonal curves having vertex set $U$; the
curves can be nested, with possibly many levels of nesting.  We let
$F$ denote the directed {\em nesting forest} whose nodes are the
cycles (connected components) of $\gamma(U)$ and whose directed edges
indicate nesting (containment) of one cycle within another; refer to
Figure~\ref{fig:nesting-forest}.  Since an optimal solution consists
of a 2-factor (an outer cycle, together with a set of cycles, one per
hole of the optimal polygon), we know that $OPT\geq |\gamma(U)|$.  (In
an optimal solution, the nesting forest corresponding to the set of
cycles covering all of $V$ (not just the points $U$ interior to
$CH(V)$) is simply a single tree that is a star: a root node
corresponding to the outer cycle, and a set of children adjacent to
the root node, corresponding to the boundaries of the holes of the
optimal polygon.)  If the nesting forest $F$ for our optimal 2-factor
is a set of isolated nodes (i.e., there is no nesting among the cycles
of the optimal 2-factor on $U$), then our algorithm outputs a polygon
with holes whose outer boundary is the boundary of the convex hull,
$CH(V)$, and whose holes are the (disjoint) polygons given by the
cycles of $\gamma(U)$.  (In this case, the total weight of our
solution is equal to $|CH(V)|+|\gamma(U)| \leq 2\cdot OPT$.)

\begin{figure}
\centering
\includegraphics[width=0.80\columnwidth]{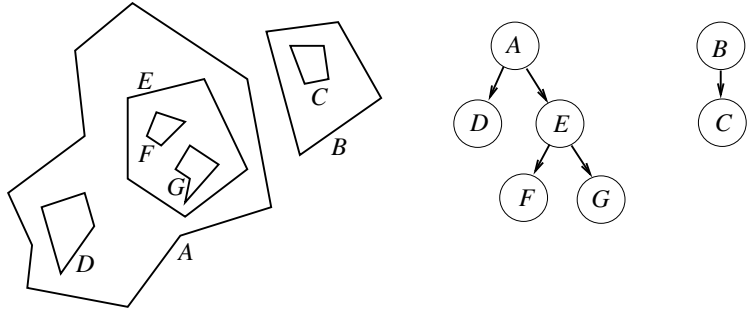}
\caption{A 2-factor (left) and its corresponding nesting forest (right).}
\label{fig:nesting-forest}
\end{figure}

Assume now that $F$ has at least one nontrivial tree.  We describe a
two-phase process that transforms the set of cycles corresponding to
$F$ into a set of pairwise-disjoint cycles, each defining a simple
polygon interior to $CH(V)$, with no nesting -- the resulting simple
polygons are disjoint, each having at least 3 vertices (from $U\subset
V$),

Phase 1 of the process transforms the cycles $\gamma(U)$ to a set of
polygonal cycles that define {\em weakly simple} polygons whose
interiors are pairwise disjoint.  (A polygonal cycle $\beta$ defines a
{\em weakly simple} polygon $P_\beta$ if $P_\beta$ is a closed, simply
connected set in the plane with a boundary, $\partial P_\beta$
consisting of a finite union of line segments, whose traversal (e.g.,
while keeping the region $P_\beta$ to one's left) is the
(counterclockwise) cycle $\beta$ (which can have line segments that
are traversed twice, once in each direction).)  The total length of
the cycles at the end of phase 1 is at most 2 times the length of the
original cycles, $\gamma(U)$.  Then, phase 2 of the process transforms
these weakly simple cycles into (strongly) simple cycles that define
disjoint simple polygons interior to $CH(V)$.  Phase 2 only does
shortening operations on the weakly simple cycles; thus, the length of
the resulting simple cycles at the end of phase 2 is at most 2 times
the total length of~$\gamma(U)$.
%
At the end of phase 2, we have a set of disjoint simple polygons within $CH(V)$,
which serve as the holes of the output polygon, whose total
perimeter length is at most $|CH(V)|+2|\gamma(U)| \leq 3\cdot OPT$.

We now describe phase 1.  Let $T$ be a nontrivial tree of $F$.
Associated with $T$ are a set of cycles, one per node.  A node $u$ of
$T$ that has no outgoing edge of $T$ (i.e., $U$ has no children) is a
sink node; it corresponds to a cycle that has no cycle contained
within it.  Let $v$ be a node of $T$ that has at least one child, but
no grandchildren.  (Such a node must exist in a nontrivial tree $T$.)
Then, $v$ corresponds to a cycle (simple polygon) $P_v$, within which
there is one or more disjoint simple polygonal cycles,
$P_{u_1},P_{u_2},\ldots,P_{u_k}$, one for each of the $k\geq 1$
children of $v$.  We describe an operation that replaces $P_v$ with a
new weakly simple polygon, $Q_v$, whose interior is disjoint from
those of $P_{u_1},P_{u_2},\ldots,P_{u_k}$.  Let $e=pq$ ($p,q\in V$) be
any edge of $P_v$; assume that $pq$ is a counterclockwise edge, so
that the interior of $P_v$ lies to the left of the oriented segment
$pq$.  Let $\Gamma$ be a shortest path within $P_v$, from $p$ to $q$,
that has all of the polygons $P_{u_1},P_{u_2},\ldots,P_{u_k}$ to its
right; thus, $\Gamma$ is a ``taut string'' path within $P_v$,
homotopically equivalent to $\partial P_v$, from $p$ to $q$.  (Such a
geodesic path is related to the ``relative convex hull'' of the
polygons $P_{u_1},P_{u_2},\ldots,P_{u_k}$ within $P_v$, which is the
shortest cycle within $P_v$ that encloses all of the polygons; the
difference is that $\Gamma$ is ``anchored'' at the endpoints $p$ and
$q$.)  Note that $\Gamma$ is a polygonal path whose vertices are
either (convex) vertices of the polygons $P_{u_j}$ or (reflex)
vertices of $P_v$.  Consider the closed polygonal walk that starts at
$p$, follows the path $\Gamma$ to $q$, then continues counterclockwise
around the boundary, $\partial P_v$, of $P_v$ until it returns to $p$.
This closed polygonal walk is the counterclockwise traversal of a
weakly simple polygon, $Q_v$, whose interior is disjoint from the
interiors of the polygons $P_{u_1},P_{u_2},\ldots,P_{u_k}$.  Refer to
Figure~\ref{fig:phase1}.  The length of this closed walk (the
counterclockwise traversal of the boundary of $Q_v$) is at most twice
the perimeter of $P_v$, since the path $\Gamma$ has length at most
that of the counterclockwise boundary $\partial P_v$, from $q$ to $p$
(since $\Gamma$ is a homotopically equivalent shortening of this
boundary).  We consider the boundary of $P_v$ to be replaced with the
cycle around the boundary of $Q_v$, and this process has reduced the
degree of nesting in $T$: node $v$ that used to have $k$ children
(leaves of $T$) is now replaced by a node $v'$ corresponding to $Q_v$,
and $v'$ and the $k$ children of $v$ are now all siblings in the
modified tree, $T'$.  If $v$ had a parent, $w$, in $T$, then $v'$ and
the $k$ children of $v$ are now children of $W$; if $v$ had no parent
in $T$ (i.e., it was the root of $T$), then $T$ has been transformed
into a set of $k+1$ cycles, none of which are nested within another
cycle of $\gamma(U)$.  (Each is within the convex hull $CH(V)$, but
there is no other surrounding cycle of $\gamma(U)$.)  We continue this
process of transforming a surrounding parent cycle (node $v$) into a
sibling cycle (node $v'$), until each tree $T$ of $F$ becomes a set of
isolated nodes, and finally $F$ has no edges (there is no nesting).

\begin{figure}
\centering
\includegraphics[width=0.80\columnwidth]{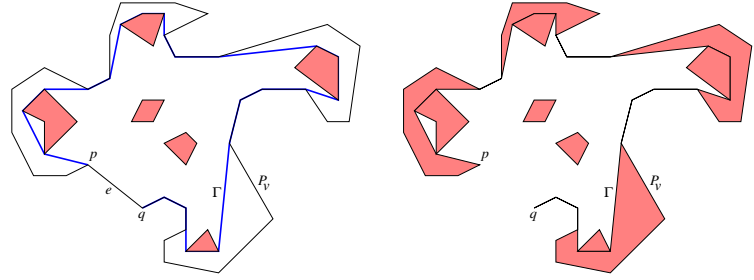}
\caption{Left: The geodesic path $\Gamma$ from $p$ to $q$ within $P_v$, surrounding all of the 
(red) polygons $P_{u_1},P_{u_2},\ldots,P_{u_k}$.  Right: The new weakly simple polygon (now red)
obtained from the traversal of $\Gamma$ and the boundary of $P_v$.}
\label{fig:phase1}
\end{figure}

Phase 2 is a process of local shortening of the cycles/polygons,
$Q_1,Q_2,\ldots,Q_m$, that resulted from phase 1, in order to remove
repeated vertices in the weakly simple cycles, so that cycles become
(strongly) simple.  There are two types of repeated vertices to
resolve: those that are repeated within the same cycle (i.e., repeated
vertices $p$ of a cycle $Q_i$ where $\partial Q_i$ ``pinches'' upon
itself), and those that are repeated across different cycles (i.e.,
vertices $p$ where one cycle is in contact with another, both having
vertex $p$).  

Consider a weakly simple polygon $Q$, and let $p$ be a vertex of $Q$
that is repeated in the cycle specifying the boundary $\partial Q$.
This implies that there are four edges of the (counterclockwise)
cycle, $p_0p$, $pp_1$, $p_2p$, and $pp_3$, incident on $p$, all of which
lie within a halfplane through $p$ (by local optimality).  There are then two subcases:
(i) $p_0,p,p_1$ is a left turn (Figure~\ref{fig:phase2}, left); and (ii) $p_0pp_1$ is
a right turn (Figure~\ref{fig:phase2}, right).
In subcase (i), $p_0p, pp_1$ define a left turn at $p$ (making $p$ locally convex for $Q$),
and $p_2p, pp_3$ define a right turn at $p$ (making $p$ locally
reflex for $Q$).  In this case, we replace the pair of edges $p_0p, pp_1$ with a shorter
polygonal chain, namely the ``taut'' version of this path, from $p_0$
to $p_1$, along a shortest path, $\beta_{0,1}$, among the polygons
$Q_i$, including $Q$, treating them as obstacles.  The taut path $\beta_{0,1}$
consists of left turns only, at (locally convex) vertices of polygons
$Q_i$ ($Q_i\neq Q$) or (locally reflex) vertices of $Q$, where new pinch points of $Q$ are created.  
Refer to Figure~\ref{fig:phase2}, left.
Case (ii) is treated similarly; see Figure~\ref{fig:phase2}, right.
%
%
Thus, resolving one repeated vertex, $p$, of $Q$ can result in the
creation of other repeated vertices of $Q$, or repeated vertices where
two cycles come together (discussed below).  The process is finite,
though, since the total length of all cycles strictly decreases with
each operation; in fact, there can be only a polynomial number of such
adjustments, since each triple $(p_0,p,p_1)$, is resolved at most
once.

\begin{figure}
\centering
\includegraphics[width=0.80\columnwidth]{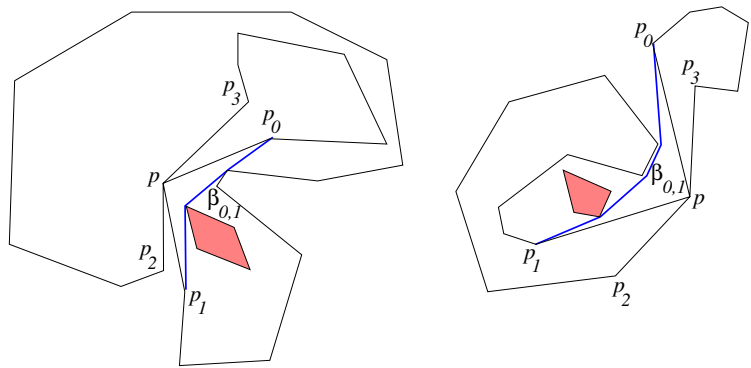}
\caption{Left: Case (i) of the phase 2 shortening process for a pinch point of $Q$.
Right: Case (ii) of the phase 2 shortening process.}
\label{fig:phase2}
\end{figure}

Now consider a vertex $p$ that appears once as a reflex vertex in
$Q_1$ (with incident ccw edges $p_0p$ and $pp_1$) and once as a convex
vertex in $Q_2$ (with incident ccw edges $p_2p$ and $pp_3$).  (Because
cycles resulting after phase 1 are locally shortest, $p$ must be
reflex in one cycle and convex in the other.)  Our local operation in
this case results in a merging of the two cycles $Q_1$ and $Q_2$ into
a single cycle, replacing edges $p_0p$ (of $Q_1$) and $pp_3$ (of
$Q_2$) with the taut shortest path, $\beta_{0,3}$.  As in the process
described above, this replacement can result in new repeated vertices,
as the merged cycle may come into contact with other cycles, or with
itself.

\begin{figure}
\centering
\includegraphics[width=0.40\columnwidth]{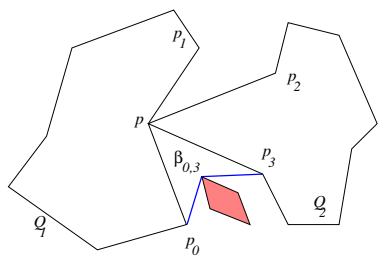}
\caption{The phase 2 shortening process for a point $p$ shared by cycles $Q_1$ and $Q_2$.}
\label{fig:phase2b}
\end{figure}

Finally, the result of phase 2, is a set of disjoint cycles, with no
repeated vertices, defining disjoint simple polygons within $CH(V)$;
these cycles define the holes of the output polygon, whose total
perimeter length is at most that of $CH(V)$, plus twice the lengths of
the cycles $\gamma(U)$ in an optimal 2-factor of the interior points
$U$.  Thus, we obtain a valid solution with objective function at most
3 times optimal.
\end{proof}

\section{IP Formulation}
\label{sec:ip_formulation}

\subsection{Cutting-Plane Approach}
In the following we develop suitable Integer Program (IPs) for solving \MinBound to provable optimality. 
The basic idea is to use a binary variable $x_e\in\{0,1\}$ for any possible edge $e\in E$,
with $x_e=1$ corresponding to making $e$ part of a solution $P$ if an only if $x_e=1$.
This allows it to describe the objective function by $\min\sum_{e \in E} x_e c_e$, where 
$c_e$ is the length of $e$. In addition, we impose a suitable set of linear constraints
on these binary variables, such that they characterize precisely the set of polygons with vertex set $V$.
The challenge is to pick a set of constraints that achieve this in a (relatively) efficient manner.

As it turns out (and is discussed in more detail in Section~\ref{sec:separation}), there is a significant
set of constraints that correspond to eliminating cycles within proper subsets $S\subset V$. Moreover,
there is an exponential number of relevant subsets $S$, making it prohibitive to impose all of these
constraints at once. The fundamental idea of a cutting-plane approach is that much fewer
constraints are necessary for characterizing an optimal solution. 
To this end, only a relatively small subfamily of constraints is initially
considered, leading to a relaxation. As long as solving the current relaxation yields 
a solution that is infeasible for the original problem, violated constraints are added in a piecemeal fashion.

In the following, these constraints 
(which are initially omitted, violated by an optimal solution of the relaxation,
then added to eliminate such infeasible solutions) are called {\em cuts},
as they remove solutions of a relaxation that are infeasible for the MPP.

\subsection{Basic IP}
We start with a basic IP that is enhanced with specific cuts, described in Sections \ref{sec:glueCut}--\ref{sec:hihCut}.
We denote by $E$ the set of all edges between two points of $V$,
$\bbC$ the set of invalid cycles and 
$\delta(v)$ the set of all edges in $E$ that are incident to $v\in V$. 
Then we optimize over the following objective function:
\begin{align}
	\label{objectiveHoles} \text{min} \sum_{e \in E} x_e c_e\,.
\end{align}
This is subject to the following constraints:
\begin{align}
	\label{degree} \forall v\in P: \sum_{e\in \delta(v)} x_e &= 2\,,\\
	\label{invalidCircle} \forall C\in \bbC: \hspace*{0.3cm}  \sum_{e\in C} x_e &\leq |C| - 1\,,\\
	\label{variableRestr}	x_e &\in \{0, 1\}\,.
\end{align}

For the TSP, $\bbC$ is simply the set of {\em all} subtours, making identification and separation straightforward.
This is much harder for the MPP, where a subtour may end up being feasible by forming the boundary of a hole,
but may also be required to connect with other cycles.
Therefore, identifying valid inequalities requires more geometric analysis, such as the following.
If we denote by $CH$ the set of all convex hull points, then a cycle $C$
is invalid if $C$ contains: \begin{enumerate}
	\item \label{prop:gluecut}
		at least one and at most $|CH|- 1$ convex hull points. (See Figure \ref{fig:outerBound}) 
	\item \label{prop:tailcut} 
		all convex hull points but does not enclose all other points. (See Figure \ref{fig:outerComp})
	\item \label{prop:holeinhole} 
		no convex hull point but encloses other points. (See Figure \ref{fig:holeinhole}) 
\end{enumerate} 

\begin{figure}[tb]
	\centering
	\subfigure[Invalid cycle of type \ref{prop:gluecut}\label{fig:outerBound}]
	{\includegraphics[width=0.33\columnwidth]{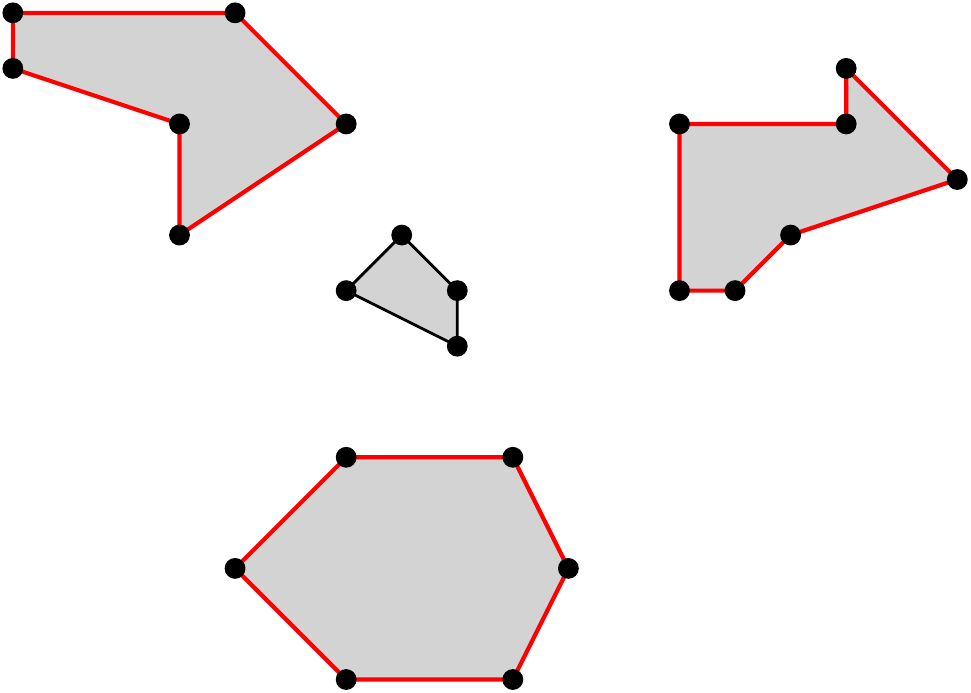}}
	\hfil
	\subfigure[Invalid cycle of type \ref{prop:tailcut}\label{fig:outerComp}]
	{\includegraphics[width=0.22\columnwidth]{./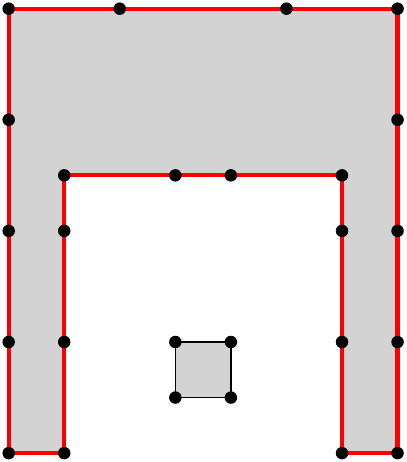}}
	\hfil
	\subfigure[Invalid cycle of type \ref{prop:holeinhole}\label{fig:holeinhole}]
	{\includegraphics[width=0.22\columnwidth]{./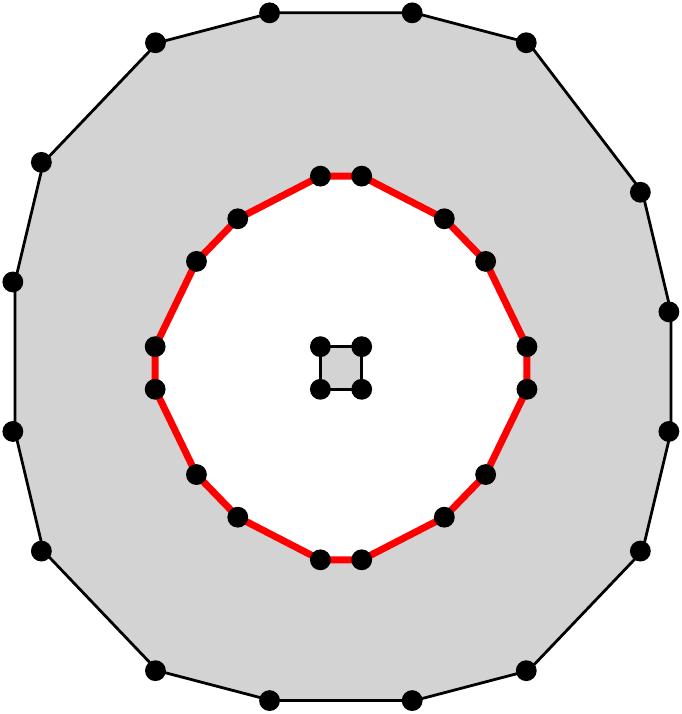}}
	\caption{Examples of invalid cycles (red). Black cycles may be valid.}
	\label{fig:invalidCircles}
\end{figure}

\noindent
By $\bbC_i$ we denote the set of all invalid cycles with property $i$. 
Because there can be exponentially many invalid cycles, we add constraint \eqref{invalidCircle} in separation steps.

\bigskip
\noindent
For an invalid cycle with property \ref{prop:gluecut} we use the equivalent cut constraint
\begin{align}
\label{invalidCircle2} \forall C\in \bbC_1: \hspace*{0.3cm}  \sum_{e\in \delta(C)} x_e \geq 2\,.
\end{align}
By using constraint \eqref{invalidCircle} if $|C| \leq \frac{2n + 1}{3}$ and constraint \eqref{invalidCircle2} otherwise,
where $\delta(C)$ denotes the ``cut'' edges connecting a vertex $v\in C$ with a vertex $v'\not\in C$.
As argued by Pferschy and Stanek \cite{pferschy2014generating}, this technique of {\em dynamic subtour constraints} (DSC)
is useful, as it reduces the number of non-zero coefficients in the constraint matrix.

\ignore{
\begin{theorem}
	With the IP ~\eqref{objectiveHoles}-~\eqref{variableRestr} and the characterization of invalid cycles we can find an optimal solution for \MinBound. 
\end{theorem}

\begin{proof}
With the characterization we ensure an outer hull containing all other points  which then build cycles that contain no other cycle. Hence, our IP gives a solution for \MinBound.
Otherwise, an optimal solution of \MinBound does not violate any constraints of the IP. Thus, it must be a feasible solution of it. Since the optimal solution of \MinBound is the smallest possible and our IP gives a \MinBound solution, we get the optimal solution for \MinBound from our IP.
\end{proof}}

\subsection{Initial Edge Set}
\label{sec:jumpStart}
\ignore{\todo[inline]{This is nothing new! Check Appelgate et al. 12.1.(1,2)}
Since there are many edges which are never be in any solution and thus only slow down the computation of an optimal solution, we came up with the idea to use a small set of edges. We use the edges of a delaunay triangulation, so we only have $O(n)$ instead of $O(n^2)$ edges to consider in our computation. Now we can solve the IP on these edges. When we reached the optimal solution on the delaunay edges, we switch back to the computation with all $O(n^2)$ edges by adding them to the existing IP. 

The advantage of this optimization is that we can identify needed cuts in the separation steps faster which also holds for the IP with all edges.

The disadvantage is that we may not be able to solve anything on delaunay edges since a delaunay triangulation may not be hamiltonian (see Figure \ref{fig:delaunayHamiltonian}).
}

\ignore{
Each solution only uses $n$ edges while the full IP has $n^2$ edges of which most are very unlikely to be part of a solution.
Further, multiple separation rounds are needed and the length of these rounds strongly correlates to the number of variables.
For this reason we first separate on a small initial set of likely edges and as soon as we are optimal on this initial set we add the other edges.
\color{red} The main idea is to feed CPLEX an initial valid solution, consequently Branch and Bound can be performed. By knowing the initial solution's value, potential solutions, that obviously won't improve this value, can be excluded early. \color{black}
Probably we need some further separation rounds on the set of all edges but a lot of important cuts have already been added and thus the number of these expensive rounds is \color{red}probably \color{black} reduced (see Section~\ref{sec:experiments}).\todo[inline]{validation}
As initial edge set we have chosen the Delaunay Triangulation due to its favorable geometric properties.
However, not every Delaunay Triangulation yields a feasible solution, see the non-hamiltonian Delaunay Triangulation of Dillencourt \cite{dillencourt1987non}.
Thus, this `jumpstart' can not be used for every instance.

The idea of starting with a subset of variables is common (i.e., column generation) and also used for solving TSP.
J\"unger et al. \cite{junger1995traveling} used the Delaunay Triangulation to obtain good bounds on large problems.
\todo[inline]{Further stuff... see book}

The optimal solution on the Delaunay Triangulation is often close to the optimal solution on all edges but can be obtained much faster. 
If only the optimal solution on the Delaunay Triangulation shall be obtained, the cut generation can be sped up. 
} 

In order to quickly achieve an initial solution, we sparsify the 
$\Theta(n^2)$ input edges to the $O(n)$ edges of the Delaunay Triangulation,
which naturally captures geometric nearest-neighbor properties.
If a solution exists, this yields an upper bound.
This technique has already been applied for the TSP by
J\"unger et al. \cite{junger1995traveling}. 
In theory, this may not yield a feasible solution: a specifically designed example by Dillencourts shows that the Delaunay
triangulation may be non-Hamiltonian~\cite{dillencourt1987non}; this same
example has no feasible solution for \MinBound. We did not observe this behavior in practice.

CPLEX uses this initial solution as an upper bound, quickly allowing it to quickly
discard large solutions in a branch-and-bound manner. 
As described in Section \ref{sec:experiments}, the resulting bounds are quite good
for the MPP.

\section{Separation Techniques}
\label{sec:separation}

\subsection{Pitfalls}
When separating infeasible cycles, the Basic IP may get stuck in an exponential number of iterations, due to the
following issues. (See Figures~\ref{fig:GCwith/out}-\ref{fig:HIHCwith/out} for illustrating examples.)

\begin{description}
	\item[Problem 1:] Multiple outer components containing convex hull points occur that despite the powerful subtour constraints do not get connected because it is cheaper to, e.g., integrate subsets of the interior points.
		Such an instance can be seen in Figure~\ref{fig:GCwith/out}, where we have two equal components with holes.
		Since the two components are separated by a distance greater than the distance between their outer components and their interior points, the outer components start to include point subsets of the holes.
		This results in a potentially exponential number of iterations.
	\item[Problem 2:] Outer components that do not contain convex hull points do not get integrated because we are only allowed to apply a cycle cut on the outer component containing the convex hull points.
		An outer component that does not contain a convex hull point cannot be prohibited as it may become a hole in later iterations.
		See Figure~\ref{fig:TCwith/out} for an example where an exponential number of iterations is needed until the outer components get connected.
	\item[Problem 3:] If holes contain further holes, we are only allowed to apply a cycle cut on the outer hole.
		This outer hole can often cheaply be modified to fulfill the cycle cut but not resolve the holes in the hole.
		An example instance can be seen in Figure~\ref{fig:HIHCwith/out}, where an exponential number of iterations is needed.
\end{description}

\begin{figure}[h!]
	\centering
	\subfigure[]
	{\includegraphics[width=0.18\columnwidth]{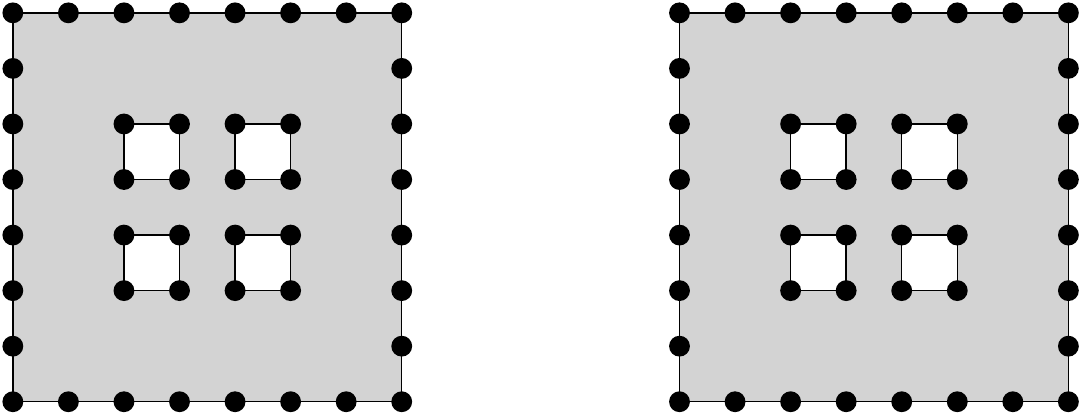}}	
	\hfil
	\subfigure[]
	{\includegraphics[width=0.18\columnwidth]{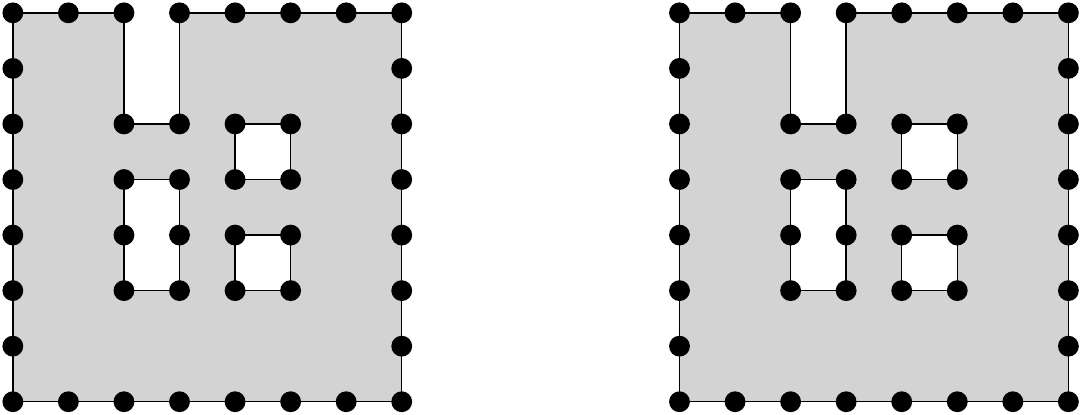}}
	\hfil
	\subfigure[]
	{\includegraphics[width=0.18\columnwidth]{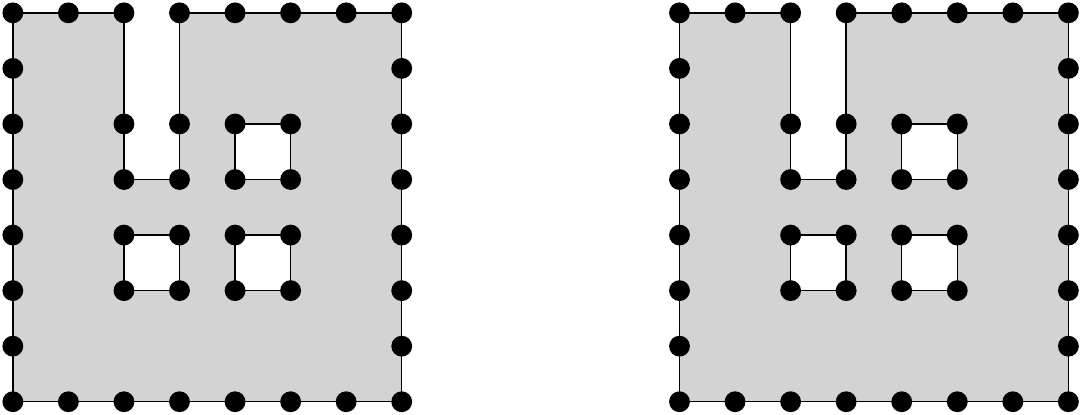}}

	\subfigure[]
	{\includegraphics[width=0.18\columnwidth]{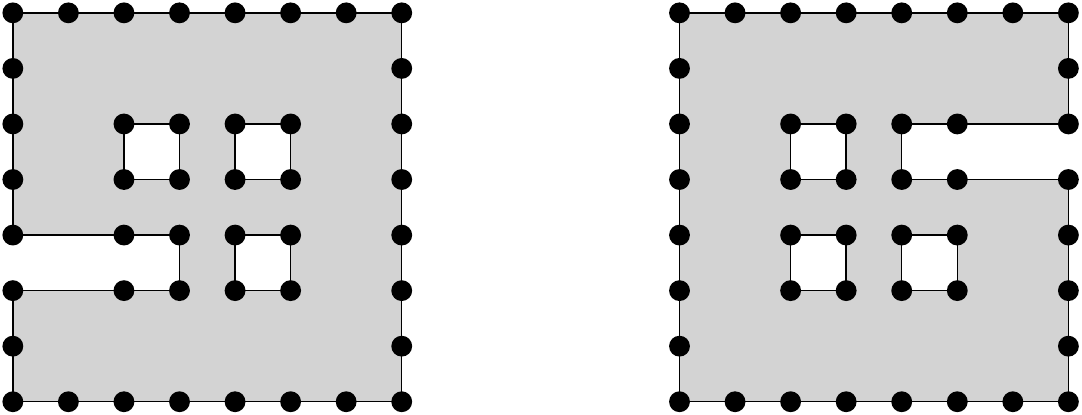}}
	\hfil
	\subfigure[]
	{\includegraphics[width=0.18\columnwidth]{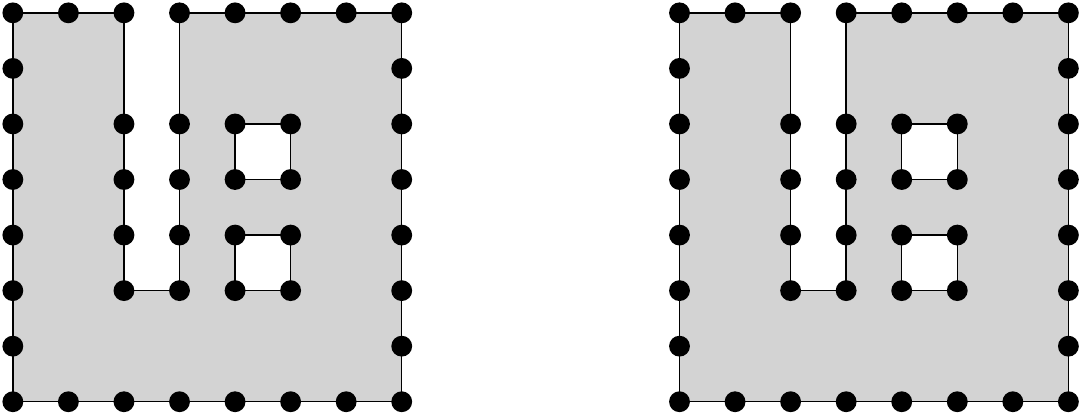}}
	\hfil	
	\subfigure[]
	{\includegraphics[width=0.18\columnwidth]{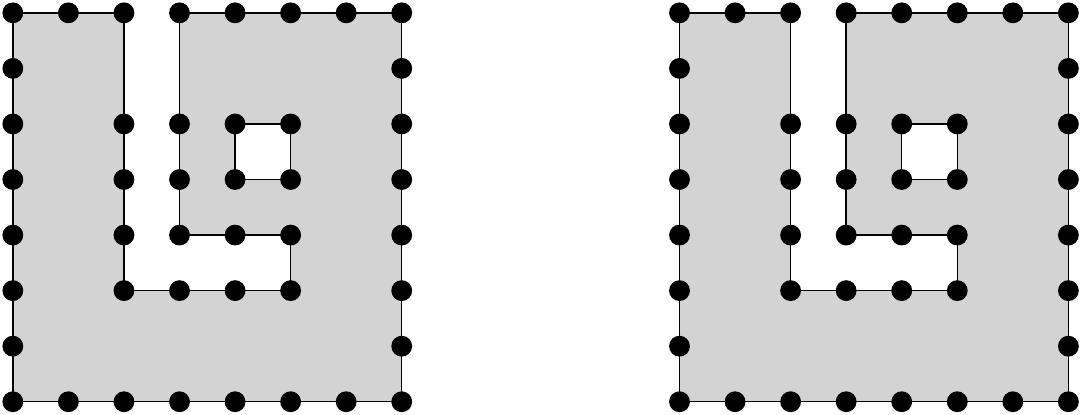}}
	\caption{(a) - (f) show consecutive iterations trying to solve an instance using only constraint \eqref{invalidCircle2}.}
	\label{fig:GCwith/out}
\end{figure}

\begin{figure}[h!]
	\centering
	\subfigure[]
	{\includegraphics[width=0.07\columnwidth,angle=90]{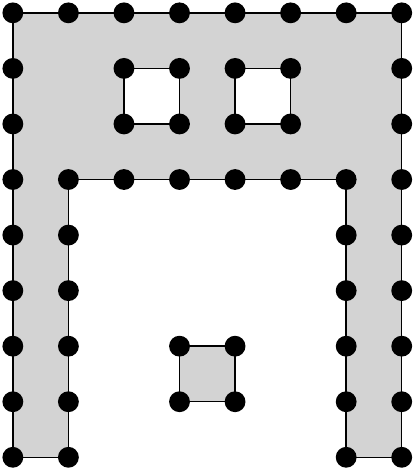}}
	\hfil
	\subfigure[]
	{\includegraphics[width=0.07\columnwidth,angle=90]{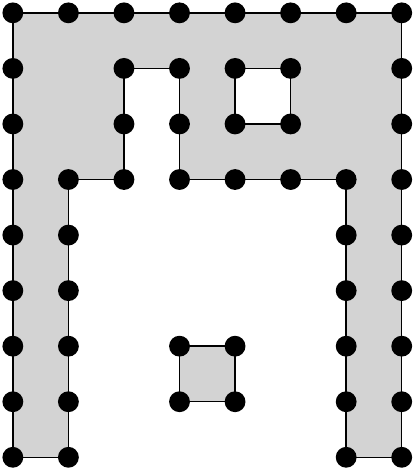}}
	\hfil
	\subfigure[]
	{\includegraphics[width=0.07\columnwidth,angle=90]{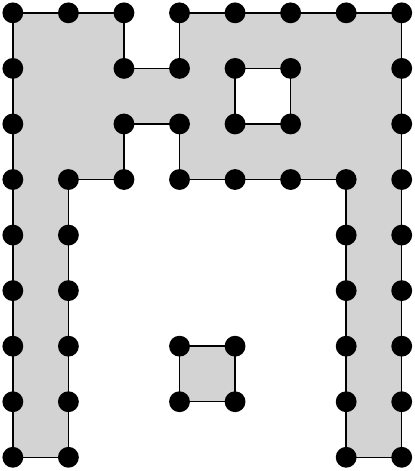}}
	\hfil
	\subfigure[]
	{\includegraphics[width=0.07\columnwidth,angle=90]{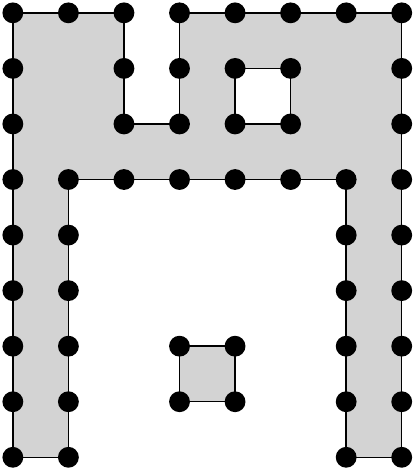}}
	
	\subfigure[]
	{\includegraphics[width=0.07\columnwidth,angle=90]{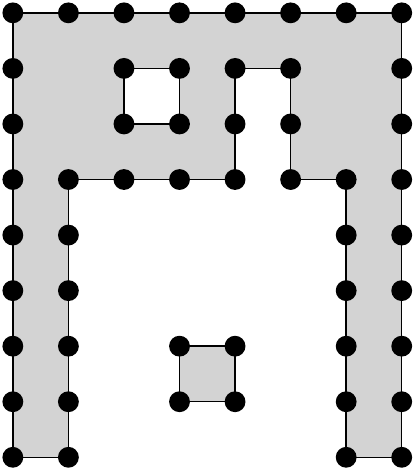}}
	\hfil
	\subfigure[]
	{\includegraphics[width=0.07\columnwidth,angle=90]{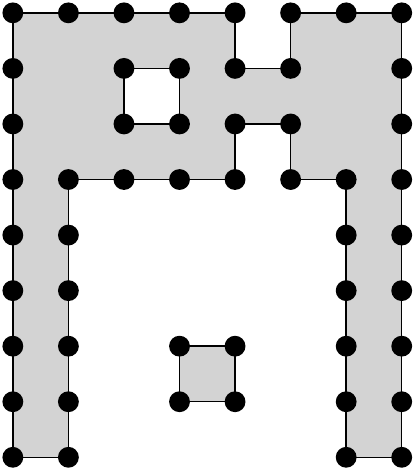}}
	\hfil
	\subfigure[]
	{\includegraphics[width=0.07\columnwidth,angle=90]{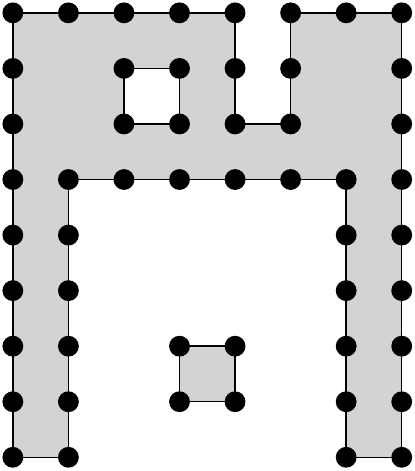}}
	\caption{(a) - (g) show consecutive iterations trying to solve an instance using only constraint \eqref{invalidCircle}.}
	\label{fig:TCwith/out}
\end{figure}

\begin{figure}[h!]
	\centering
	\subfigure[]
	{\includegraphics[width=0.10\columnwidth]{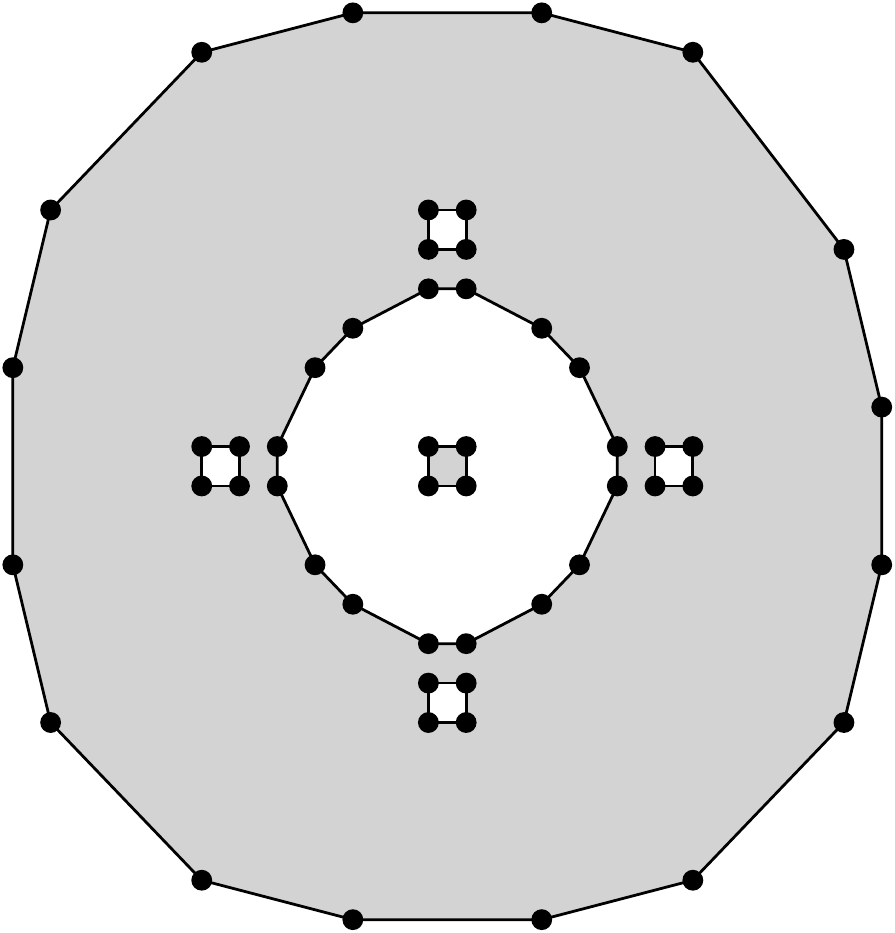}}
	\hfil
	\subfigure[]
	{\includegraphics[width=0.10\columnwidth]{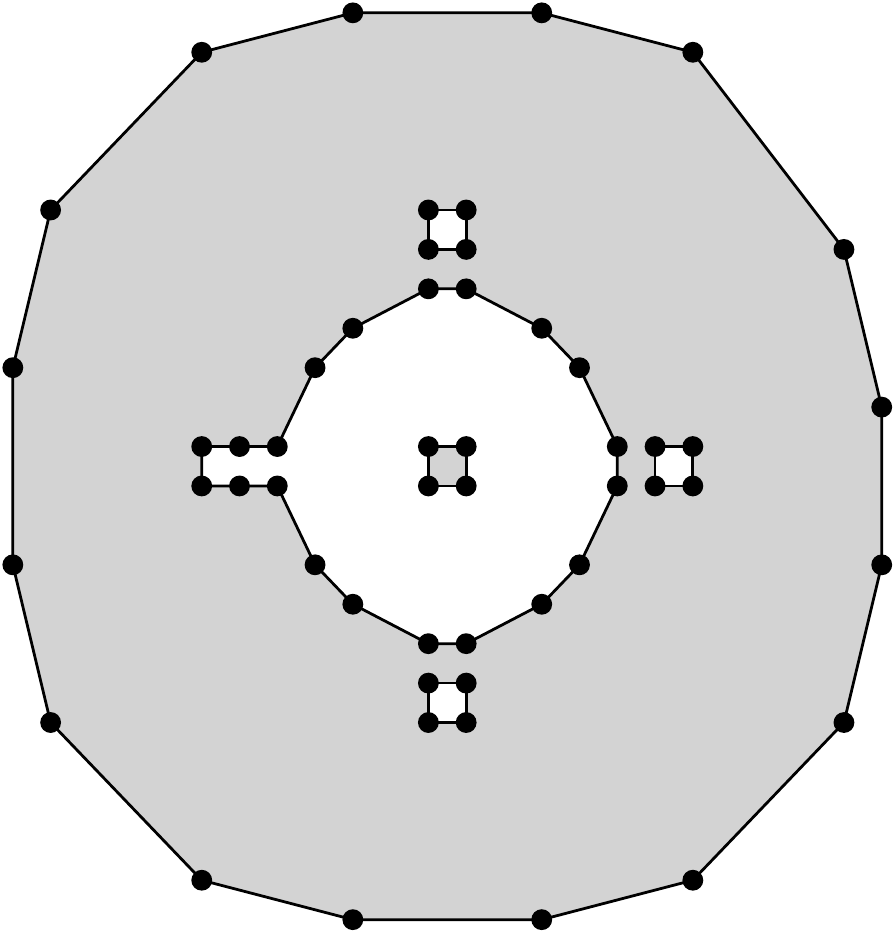}}
	\hfil
	\subfigure[]
	{\includegraphics[width=0.10\columnwidth]{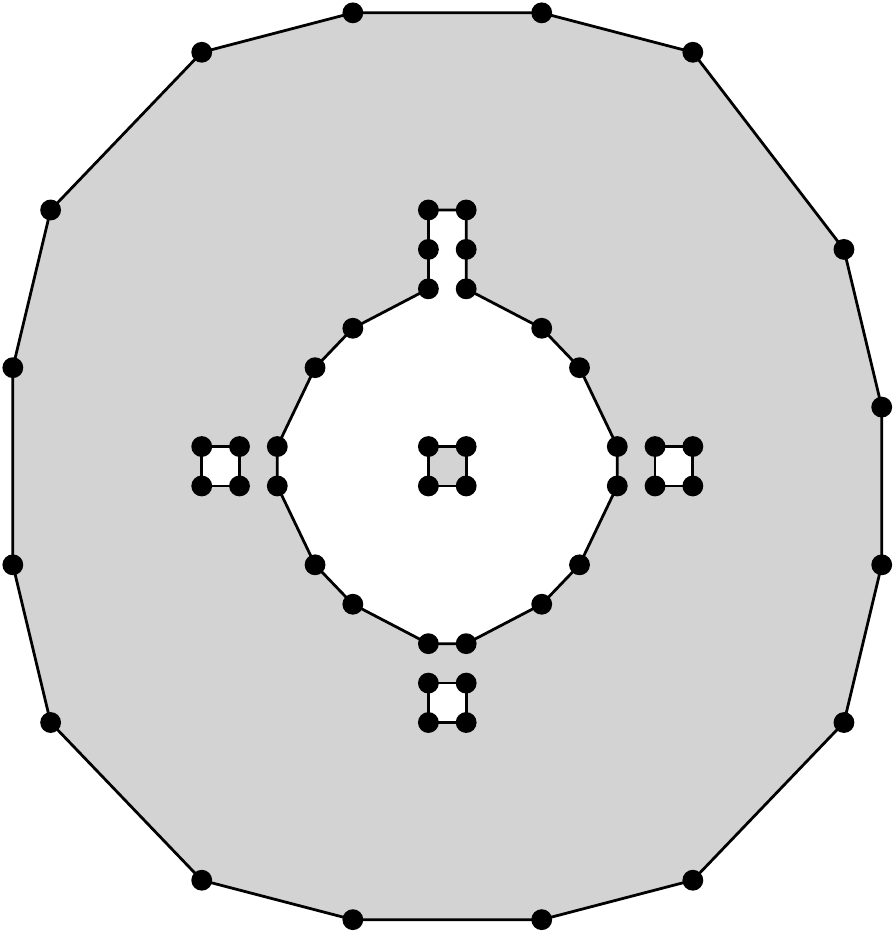}}
	\hfil
	\subfigure[]
	{\includegraphics[width=0.10\columnwidth]{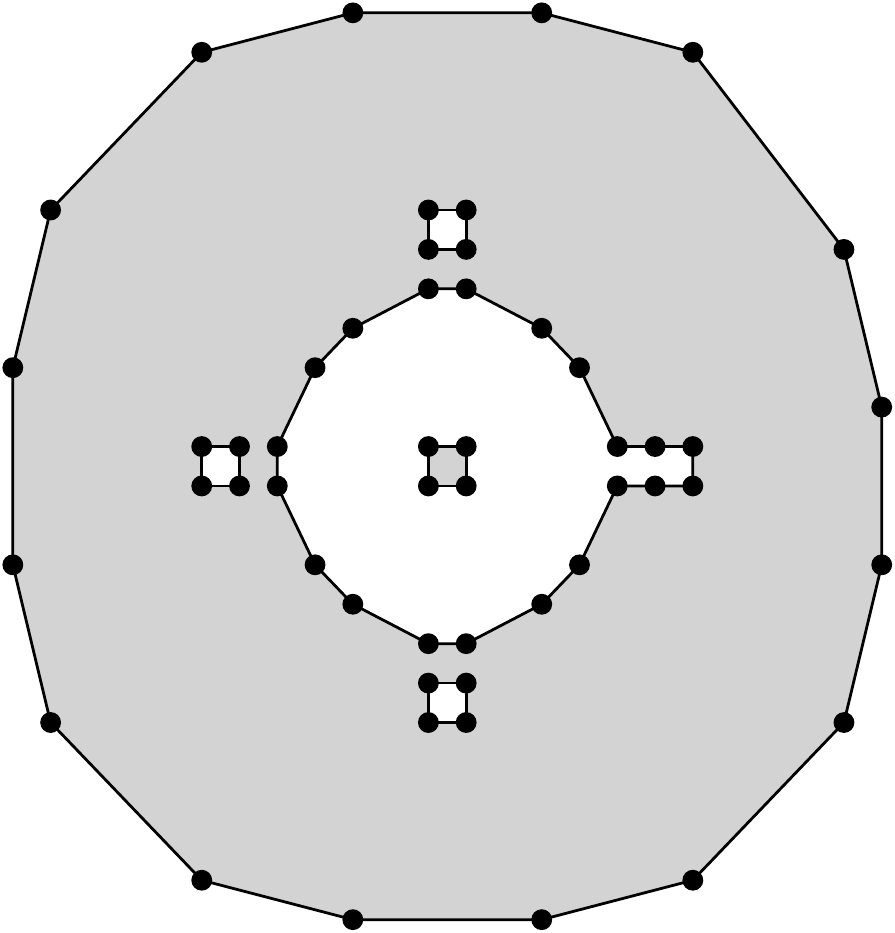}}
	
	\subfigure[]
	{\includegraphics[width=0.10\columnwidth]{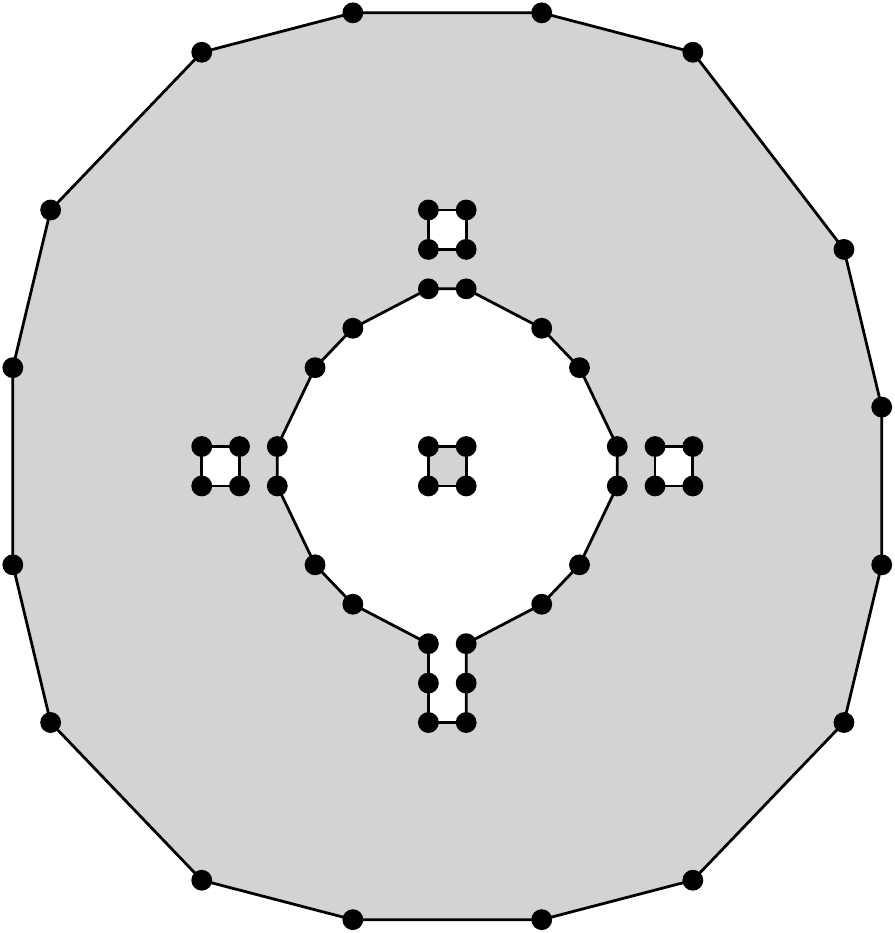}}
	\hfil
	\subfigure[]
	{\includegraphics[width=0.10\columnwidth]{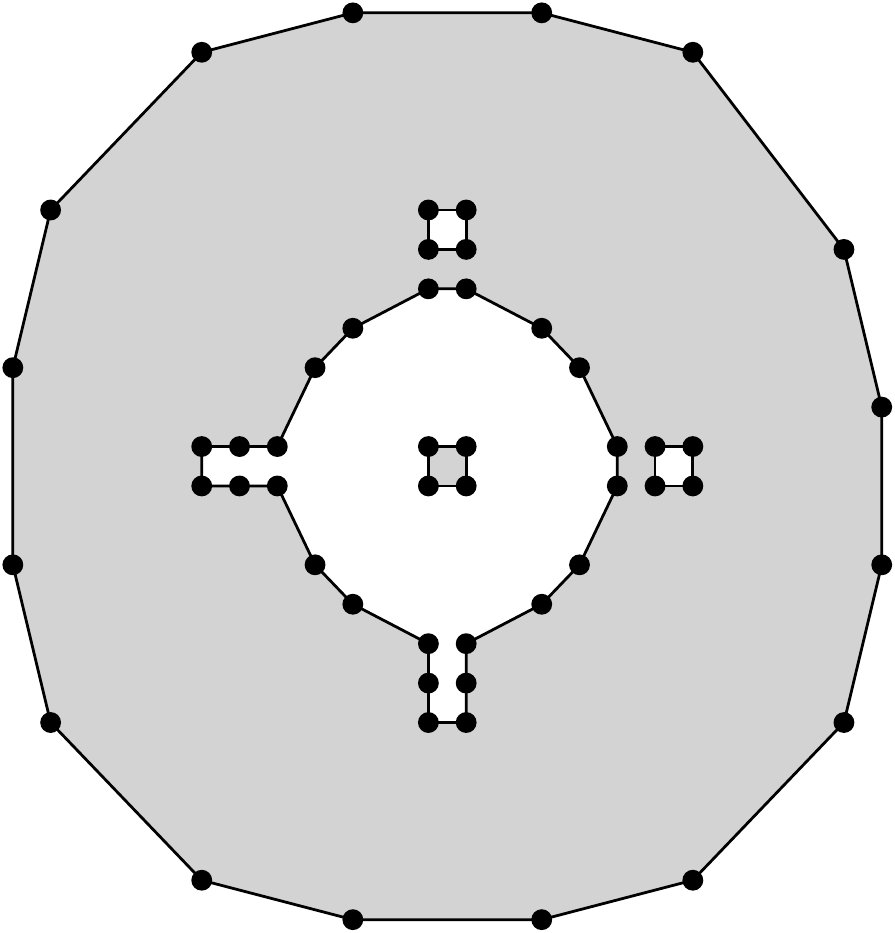}}
	\hfil
	\subfigure[]
	{\includegraphics[width=0.10\columnwidth]{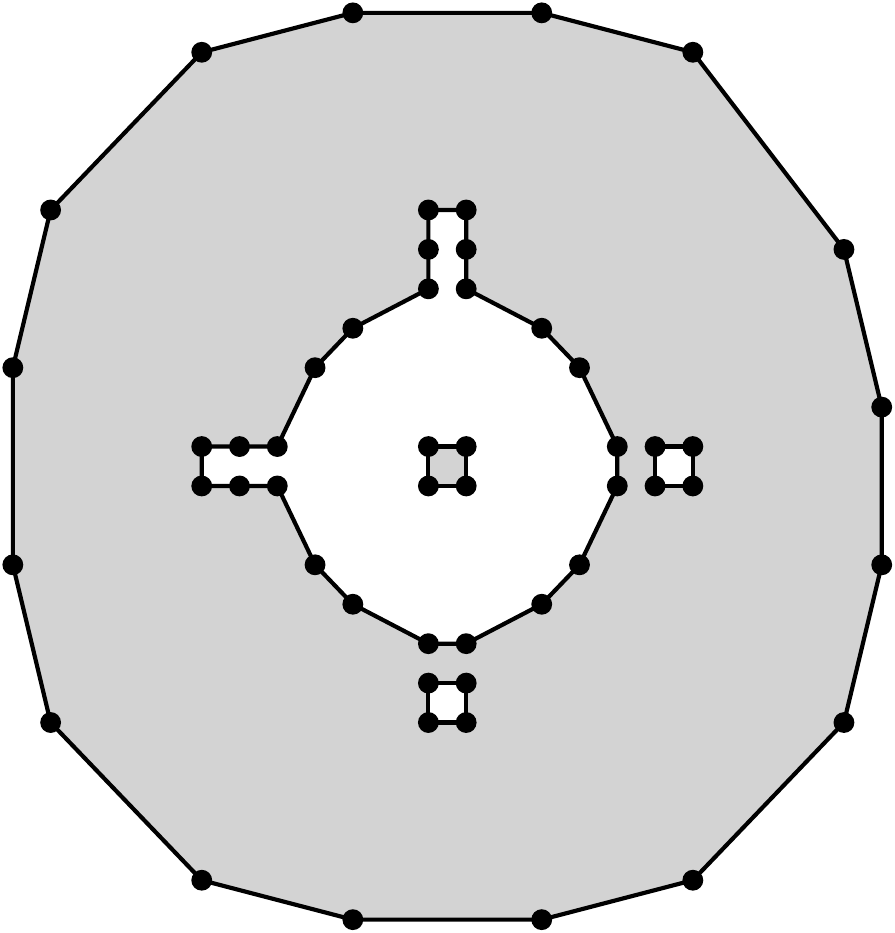}}
	
	\caption{(a) - (g) show consecutive iterations trying to solve an instance using  only constraint \eqref{invalidCircle}.}
	\label{fig:HIHCwith/out}
\end{figure}

\old{
\begin{figure}
	\centering
	\subfigure[]
	{\includegraphics[width=0.22\columnwidth]{./fig/CutTechs/GC/GCProblem}}	
	\hfil
	\subfigure[]
	{\includegraphics[width=0.22\columnwidth]{./fig/CutTechs/GC/withoutGC1}}
	\hfil
	\subfigure[]
	{\includegraphics[width=0.22\columnwidth]{./fig/CutTechs/GC/withoutGC2}}

	\subfigure[]
	{\includegraphics[width=0.22\columnwidth]{./fig/CutTechs/GC/withoutGC6}}
	\hfil
	\subfigure[]
	{\includegraphics[width=0.22\columnwidth]{./fig/CutTechs/GC/withoutGC3}}
	\hfil	
	\subfigure[]
	{\includegraphics[width=0.22\columnwidth]{./fig/CutTechs/GC/withoutGC4}}
	\caption{(a) - (f) show consecutive iterations trying to solve an instance only using constraint \eqref{invalidCircle2}.}
	\label{fig:GCwith/out}
\end{figure}

\begin{figure}
	\centering
	\subfigure[]
	{\includegraphics[width=0.1\columnwidth,angle=90]{./fig/CutTechs/TC/TCProblem}}
	\hfil
	\subfigure[]
	{\includegraphics[width=0.1\columnwidth,angle=90]{./fig/CutTechs/TC/withoutTC1}}
	\hfil
	\subfigure[]
	{\includegraphics[width=0.1\columnwidth,angle=90]{./fig/CutTechs/TC/withoutTC2}}
	\hfil
	\subfigure[]
	{\includegraphics[width=0.1\columnwidth,angle=90]{./fig/CutTechs/TC/withoutTC3}}
	
	\subfigure[]
	{\includegraphics[width=0.1\columnwidth,angle=90]{./fig/CutTechs/TC/withoutTC4}}
	\hfil
	\subfigure[]
	{\includegraphics[width=0.1\columnwidth,angle=90]{./fig/CutTechs/TC/withoutTC5}}
	\hfil
	\subfigure[]
	{\includegraphics[width=0.1\columnwidth,angle=90]{./fig/CutTechs/TC/withoutTC6}}
	\caption{(a) - (g) show consecutive iterations trying to solve an instance only using constraint \eqref{invalidCircle}.}
	\label{fig:TCwith/out}
\end{figure}

\begin{figure}
	\centering
	\subfigure[]
	{\includegraphics[width=0.15\columnwidth]{./fig/CutTechs/HIHC/HIHCProblem}}
	\hfil
	\subfigure[]
	{\includegraphics[width=0.15\columnwidth]{./fig/CutTechs/HIHC/withoutHIHC1}}
	\hfil
	\subfigure[]
	{\includegraphics[width=0.15\columnwidth]{./fig/CutTechs/HIHC/withoutHIHC2}}
	\hfil
	\subfigure[]
	{\includegraphics[width=0.15\columnwidth]{./fig/CutTechs/HIHC/withoutHIHC3}}
	
	\subfigure[]
	{\includegraphics[width=0.15\columnwidth]{./fig/CutTechs/HIHC/withoutHIHC4}}
	\hfil
	\subfigure[]
	{\includegraphics[width=0.15\columnwidth]{./fig/CutTechs/HIHC/withoutHIHC5}}
	\hfil
	\subfigure[]
	{\includegraphics[width=0.15\columnwidth]{./fig/CutTechs/HIHC/withoutHIHC6}}
	
	\caption{(a) - (g) show consecutive iterations trying to solve an instance only using  constraint \eqref{invalidCircle}.}
	\label{fig:HIHCwith/out}
\end{figure}
}

The second problem is the most important, as this problem frequently becomes critical on instances of size 100 and above.
Holes in holes rarely occur on small instances but are problematic on instances of size $>200$.
The first problem  occurs only in a few instances.

In the following we describe three cuts that each solve one of the problems: The glue cut for the first problem in Section~\ref{sec:glueCut}, the tail cut for the second problem in Section~\ref{sec:tailCut}, and the HiH-Cut for the third problem in Section~\ref{sec:hihCut}.

\subsection{Glue Cuts}
\label{sec:glueCut}
To separate invalid cycles of property~\ref{prop:gluecut} we use \textit{glue cuts} (GC),
based on a curve $R_D$ from one unused convex hull edge to another (see Figure~\ref{fig:GlueCutExample}).
With $\mathcal{X}(R_D)$ denoting the set of edges crossing $R_D$,
we can add the following constraint:
\[\sum_{e\in \mathcal{X}(R_D)} x_e \geq 2\,.\]
\begin{figure}[tb]
	\centering
	\subfigure[]
	{\includegraphics[width=0.3\columnwidth]{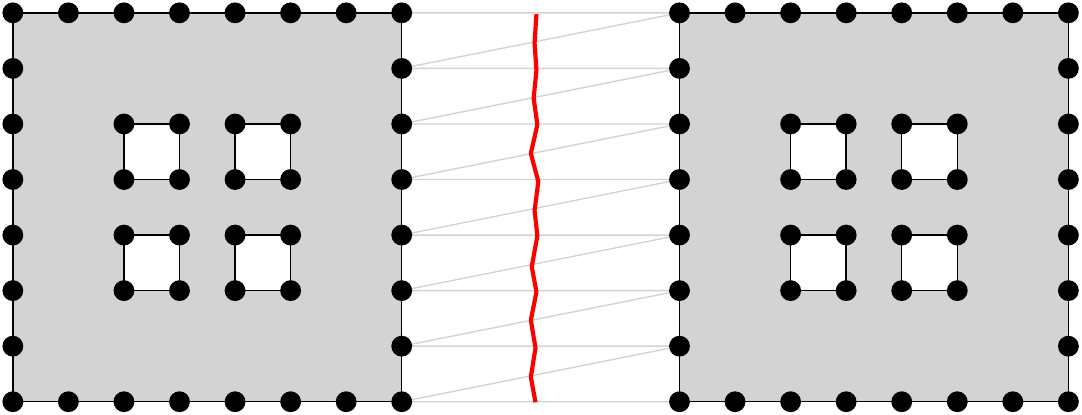}}
	\hfil
	\subfigure[]
	{\includegraphics[width=0.3\columnwidth]{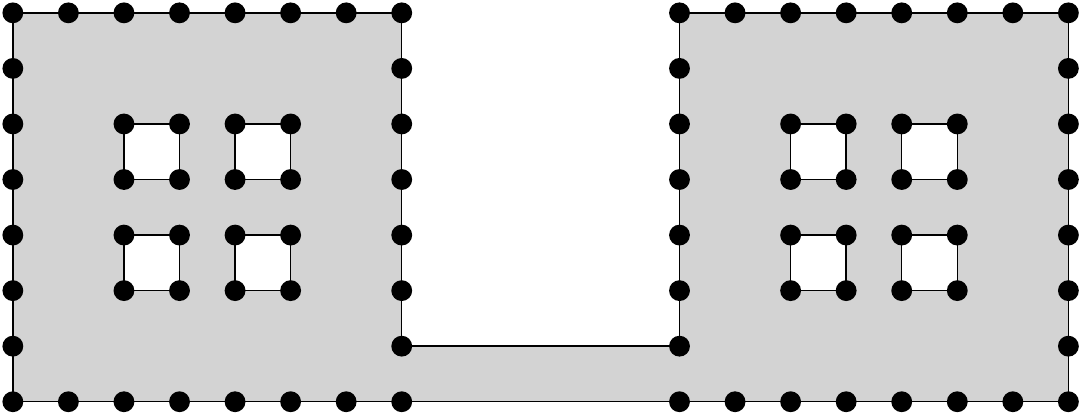}}
	\caption{Solving instance from Figure~\ref{fig:GCwith/out} with a glue cut (red). (a) The red curve needs to be crossed at least twice; it is found using the Delaunay Triangulation (grey). (b) The first iteration after using the glue cut.}
	\label{fig:GlueCutExample}
\end{figure}

Such curves can be found by considering a constrained Delaunay triangulation 
\cite{chew1989constrained} of the current solution, 
performing a breadth-first-search starting from all unused convex
hull edges of the triangulation.
Two edges are adjacent if they share a triangle.
Used edges are excluded, so our curve will not cross any used edge.
As soon as two different search trees meet, 
we obtain a valid curve by using the middle points of the edges (see red curve in Figure~\ref{fig:GlueCutExample}).

For an example, see Figure~\ref{fig:GlueCutExample}; as illustrated in Figure~\ref{fig:GCwith/out}, this instance is problematic in the Basic IP.
This can we now be solved in one iteration.

\subsection{Tail Cuts}
\label{sec:tailCut}
An outer cycle $C$ that does not contain any convex hull points cannot be simply excluded, as it may become a legal hole later.
Such a cycle either has to be merged with others, or become a hole.
For a hole,  each curve from the hole to a point outside of the convex hull must be crossed at least once.

With this knowledge we can provide the following constraint, making
use of a special curve, which we call a {\em tail} (see the red path in Figure~\ref{fig:tailcut}).

Let $R_T$ be a valid tail and $\mathcal{X}(R_T)$ the edges crossing it.
We can express the constraint in the following form:
\[\underbrace{\sum_{e\in \mathcal{X}(R_T)\setminus\delta(C)} x_e}_\text{C gets surrounded} + \underbrace{\sum_{e\in \delta(C)} x_e}_\text{C dissolves} \geq 1\,.\]

The tail is obtained in a similar fashion as the curves of the \emph{Glue Cuts}
by building a constrained Delaunay triangulation and doing a breadth-first
search starting at the edges of the cycle.  The starting points are not
considered as part of the curve and thus the curve does not cross any edges of
the current solution.

For an example, see Figure~\ref{fig:tailcut}; as illustrated in
Figure~\ref{fig:TCwith/out}, this instance is problematic in the Basic IP.
This can we now be solved in one iteration.  Note that even though it is
possible to cross the tail without making the cycle a hole, this is more
expensive than simply merging it with other cycles.

\ignore{While solving the IP it can happen that we get an intermediate solution where a cycle $C$ with no convex hull point exists that is not enclosed by another component. This is equivalent to the property \ref{prop:tailcut} of invalid cycles.}
\ignore{
If we have invalid cycles of property \ref{prop:tailcut}, we know that there exists a cycle $C$ with no convex hull points which is not enclosed by another cycle.  
In a separation step we cannot state if $C$ will be a hole of the optimal solution or if it will be connected to the outer boundary in some way. That means one case of the following will be true in an optimal solution:
\begin{enumerate}
	\item $C$ is surrounded by edges of the outer boundary.
	\item At least two edges leave $C$.
\end{enumerate}

We will now construct a constraint that covers both cases at the same time. Consider a path $R_T$ from $C$ to an unused convex hull edge (see Figure~\ref{fig:tailcut}). Since $C$ is not enclosed by any other component this path must exist. Now let $\mathcal{X}(R_T)$ be the set of edges crossing $R_T$.
We sum up over all edges crossing $R_T$ which do not start in $C$ plus all edges leaving $C$.
Hence we gain the following constraint:
\[\underbrace{\sum_{e\in \mathcal{X}(R_T)\setminus\delta(C)} x_e}_\text{C gets surrounded} + \underbrace{\sum_{e\in \delta(C)} x_e}_\text{C dissolves} \geq 1\]
We call this constraint a \textit{tail cut} (TC). Note that we may have to do more than one cut to ensure that one case will be true.}
\begin{figure}[tb]
	\centering
		\subfigure[]
		{\includegraphics[width=0.15\columnwidth,angle=90]{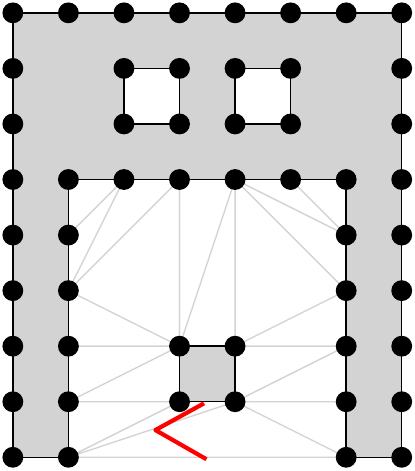}}
		\hfil
		\subfigure[]
		{\includegraphics[width=0.15\columnwidth,angle=90]{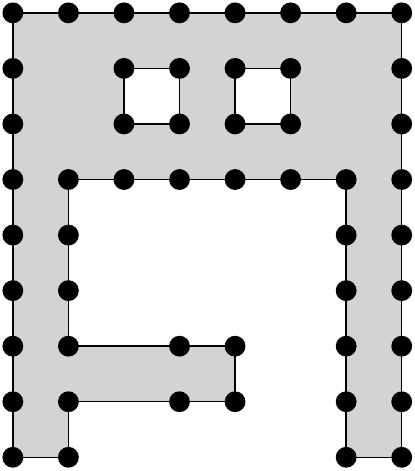}}
		\caption{Solving the instance from Figure~\ref{fig:TCwith/out} with tail cut (red line). (a) The red curve needs to be crossed at least twice or two edges must leave the component. The red curve is found via the Delaunay Triangulation (grey). (b) The first Iteration after using the tail cut.}
		\label{fig:tailcut}
\end{figure}

\subsection{Hole-in-Hole Cuts}
\label{sec:hihCut}

The difficulty of eliminating holes in holes (Problem 3) is
that they may end up as perfectly legal simple holes, if the outer cycle gets merged with the outer boundary.
In that case, every curve from the hole to the convex hull \emph{cannot} cross the used edges exactly two times (edges of the hole are ignored).
One of the crossed edges has to be of the exterior cycle and the other one cannot be as otherwise we would be outside again.
It can also be not of an interior cycle, as we would have to leave it again to reach our hole.

Therefore the inner cycle of a hole in hole either has to be merged, or all curves from it to the convex hull do not have exactly two used edge crossings.
As it is impractical to argue over all curves, we only pick one curve $P$ that currently crosses exactly two used edges 
(see the red curve in Figure~\ref{fig:holeinholcut} with crossed edges in green).

Because we cannot express the inequality that $P$ is not allowed to be crossed exactly two times as an linear programming constraint,
we use the following weaker observation.
If the cycle of the hole in hole becomes a simple hole, the crossing of $P$ has to change.
Let $e_1$ and $e_2$ be the two used edges that currently cross $P$ and $\mathcal{X}(P)$ the set of all edges crossing $P$ (including unused but no edges of $H$).
We can express a change on $P$ by
\[\underbrace{\sum_{e\in \mathcal{X}(P)\setminus\{e_1,e_2\}}x_e}_{\text{new crossing}} + \underbrace{-x_{e_1}-x_{e_2}}_{\text{$e_1$ or $e_2$ vanishes}} \geq -1\,.\]
%
Together we obtain the following LP-constraint for either $H$ being merged or the crossing of $P$ changes:
\[\underbrace{\sum_{e\in \delta(V_H, V\setminus V_H)}x_e}_{\text{$H$ dissolves}} + \underbrace{\sum_{e\in \mathcal{X}(P)\setminus\{e_1,e_2\}}x_e + -x_{e_1}-x_{e_2}}_{\text{Crossing of $P$ changes}} \geq -1\,.\]

Again we use a breadth-first search on the constrained Delaunay triangulation
starting from the edges of the hole in hole. Unlike the other two cuts we need
to cross used edges. Thus, we get a shortest path search such that the optimal
path primarily has a minimal number of used edges crossed and secondarily has a
minimal number of all edges crossed.



\ignore{
Let us denote the hole in hole as $\mathcal{H}\subset E$ and the selected curve as $P$.
We can split the constraint in a disjunction of three terms:
\begin{enumerate}
	\item $\mathcal{H}$ gets connected to outside (Subtour constraints gets valid, 2 edges outside). Splitting of hole in hole does not help. $\sum_{e\in E(V_\mathcal{H}, V\setminus V_\mathcal{H})}x_e\geq 2$ (1)
	\item $P$ crosses 
	\item The depth of $\mathcal{H}$ changes, resulting in a change of the active edges on the path $E_\mathcal{P}$. Thus either one of the edges in $E_\mathcal{P}\setminus\{e_1,e_2\}$ become active $\sum_{e\in E_\mathcal{P}\setminus\{e_1,e_2\}}x_e\geq 1$ (2) or $e_1$ or $e_2$ becomes inactive $x_{e_1}+x_{e_2}\leq 1$ (3). Please note that a depth change imply this but not the other way around.
\end{enumerate}

We can express this by the following constraint:
\[\underbrace{\sum_{e\in E(V_\mathcal{H}, V\setminus V_\mathcal{H})}x_e}_{(1)} + \underbrace{\sum_{e\in E_\mathcal{P}\setminus\{e_1,e_2\}}x_e}_{(2)} + \underbrace{-x_{e_1}-x_{e_2}}_{(3)} \geq -1\]
This is a disjunction of the three above mentioned conditions.
For the current illegal solution the left hand side evaluates to $-2$ but the fulfillment of any of the three constraints increase the value and makes the constraint valid.

For each hole in hole there exists a curve from the hole in hole to the convex hull that crossed exactly two used edges (the edges of the hole in hole are ignored).
See the red path in Figure~\ref{fig:holeinholcut} and the crossed edges in green.

One problem that remains are hole in holes.
\color{red} As it is not efficient to solve an instance directly, we use a relaxation of the problem which leads to the presence of holes in holes, or in other words holes of depth 2. However, such holes of depth 2 should not be present in the overall solution and thus will have to be eliminated by suitable constraints. We cannot prohibit holes of depth $> 1$ directly, as such holes might end up being holes of lower depth. A surrounding hole could connect differently, thus decreasing the depth of the formerly illegal hole by 1. \color{black}
In many small instances ($<200$ points) they do not occur but in larger ones, they are the reason the current IP gets stuck.
The main reason is that we are only allowed to prohibit the depth 1 hole that encloses the holes in holes.
Thus for a hole $\mathcal{C}\subset E$ of depth 1 that contains holes in holes we can only apply the following hole elimination constraint
\[\sum_{e\in \mathcal{C}} x_e \leq |\mathcal{C}|-1\]
The more powerful subtour constraints from above are not legit, as can be seen in Figure~\ref{fig:subtourconstraints}.}

\ignore{\begin{figure}
\centering
	\subfigure[First Iteration without any cuts made.]
	{\includegraphics[width=0.27\columnwidth]{./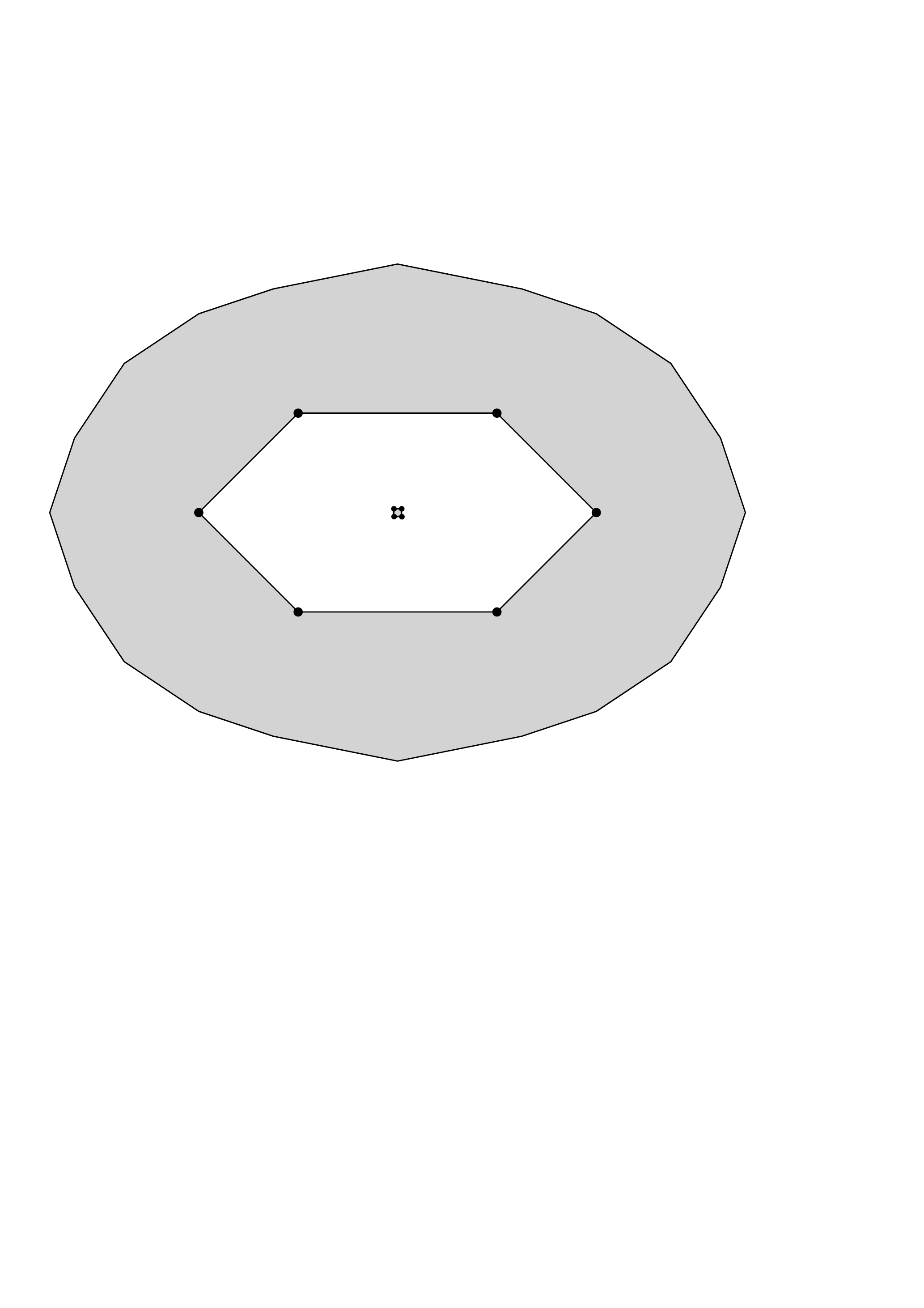}}
	\hfil
	\subfigure[Solution with the more powerful subtour contraint.]
	{\includegraphics[width=0.27\columnwidth]{./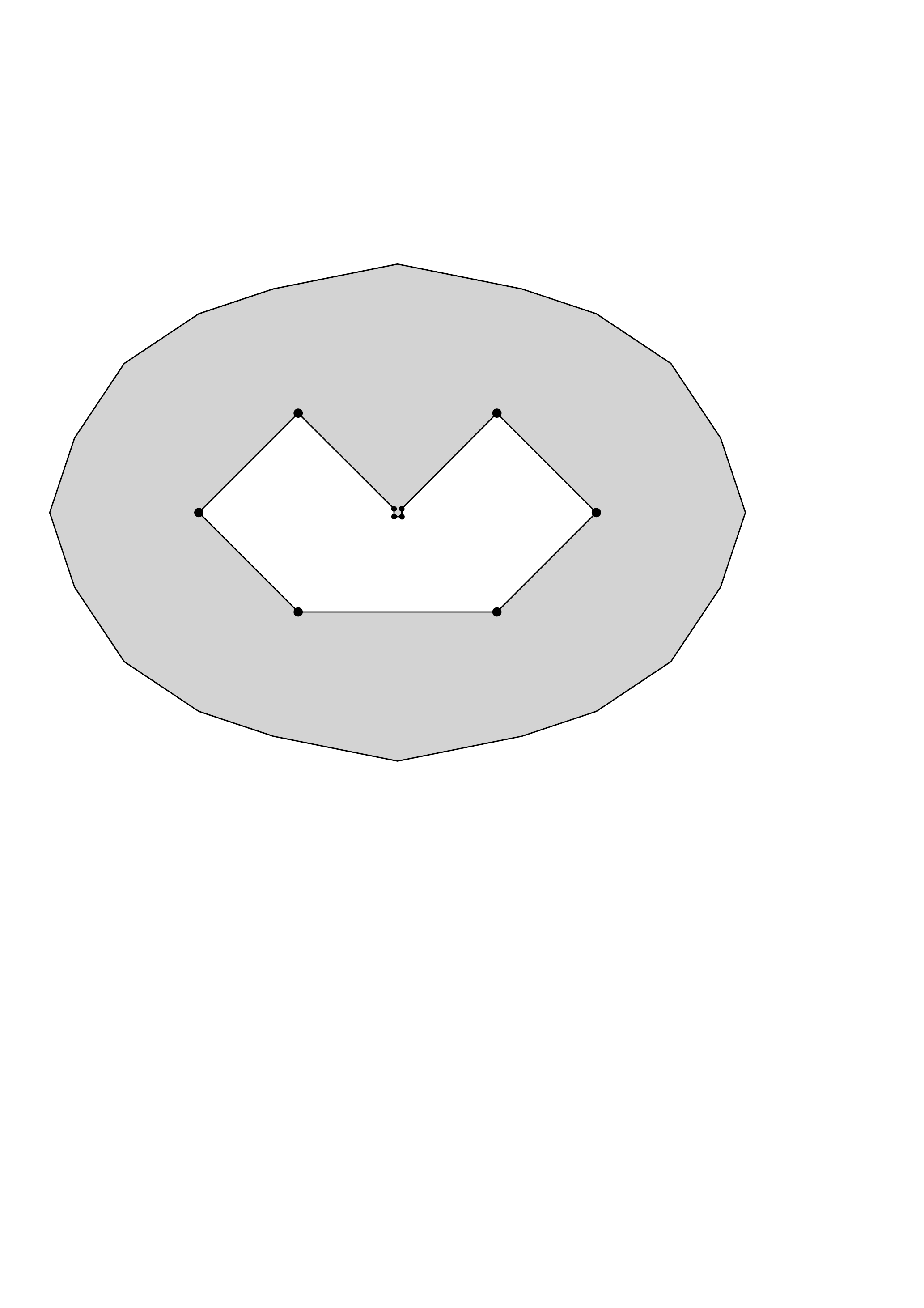}}
	\hfil
	\subfigure[Optimal solution.]
	{\includegraphics[width=0.27\columnwidth]{./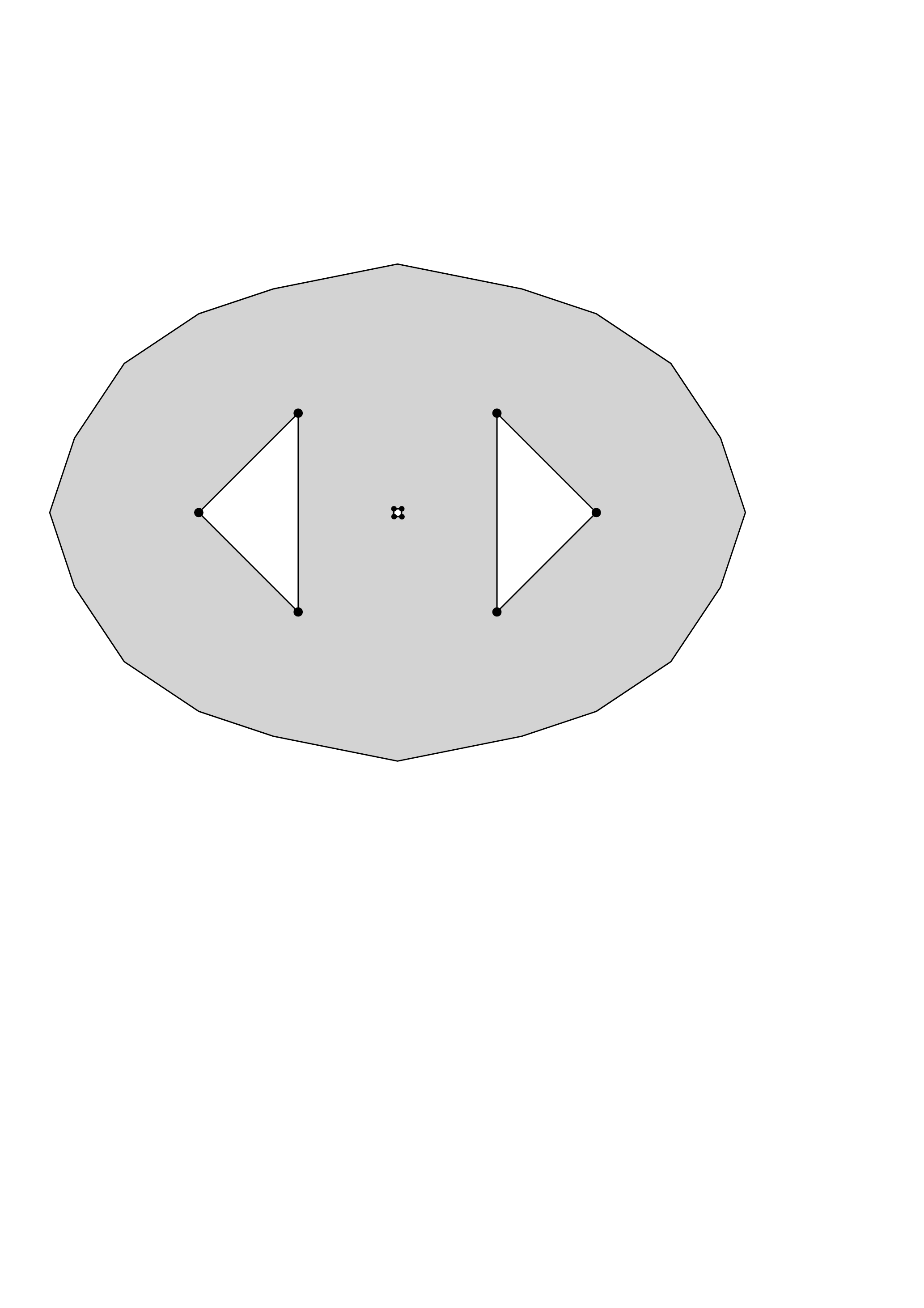}}
	\caption{Applying the more powerful subtour constraints prohibits the optimal solution.}
	\todo[inline]{remove?}
	\label{fig:subtourconstraints}
\end{figure}}

\begin{figure}
	\centering
	\subfigure[]
	{\includegraphics[width=0.2\columnwidth]{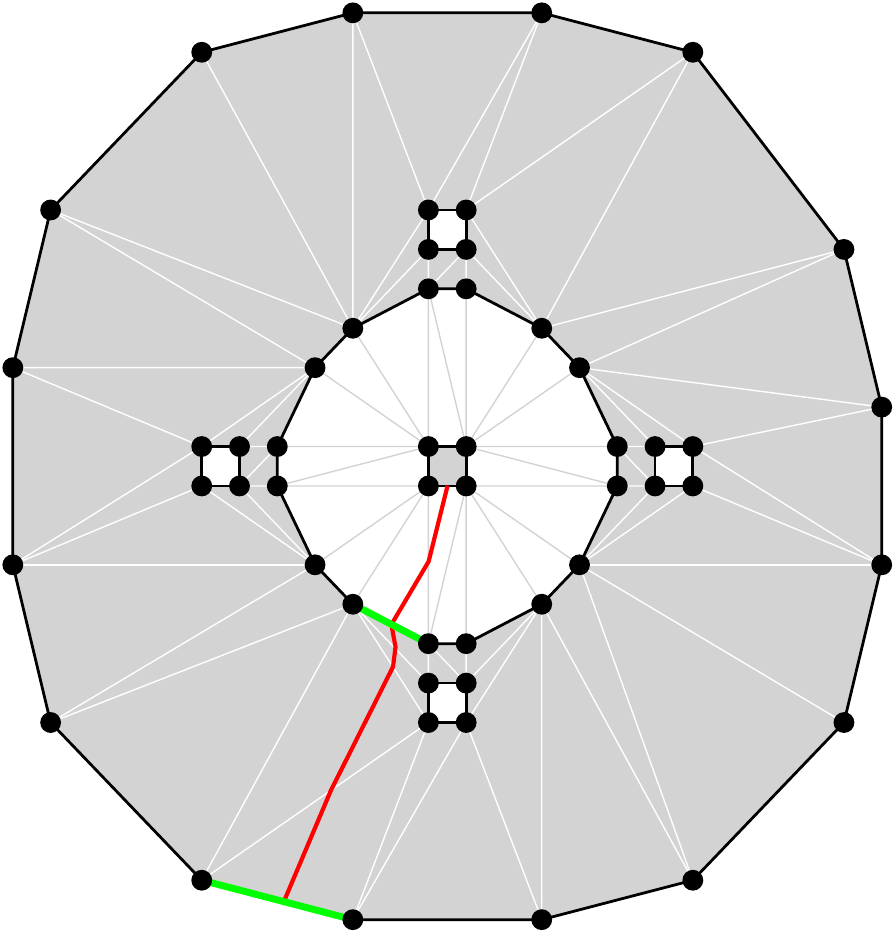}}
	\hfil
	\subfigure[]
	{\includegraphics[width=0.2\columnwidth]{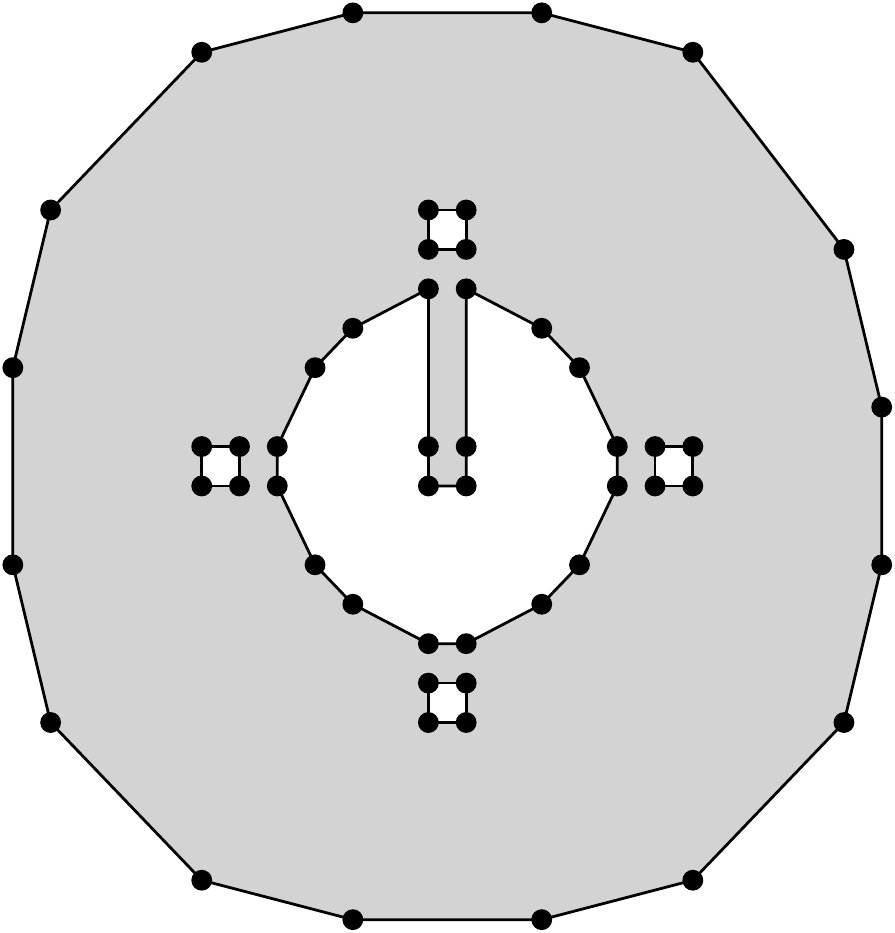}}
	\caption{Solving instance from Figure~\ref{fig:HIHCwith/out} with hole in hole cut (red line). (a) Red line needs to be crossed at least two times or two edges must leave the component or one of the two existing edges (green) must be removed. Red line is built via Delaunay Triangulation. (b) First Iteration after using the hole in hole cut.}
	\label{fig:holeinholcut}
\end{figure}
\ignore{The HiC-Cuts are following a similar approach to the tail cuts but now we have to cross edges of the instance which makes them a little more complicated.
\color{red} The main idea is to find a path from the illegal hole’s boundary to the outer boundary. When traversing the polygon, we may only be allowed to traverse the 1-surrounding hole. Other holes have to be circumnavigated. In our implementation this is done by constructing the corresponding delauney graph. \color{black}
The HiH-cuts are applied on holes of depth 2 (but may also contain further holes themselves) only but an adaption to holes of higher depth might be possible.
For simplicity let us assume we only have one hole in hole $\mathcal{H}\subset E$ (with points $V_\mathcal{H}$) that does not contain further holes.
There exists a path $\mathcal{P}$ starting at the hole in hole that goes outside the convex hull and only crosses two edges of the solution.
Let's call these two crossed edges $e_1$ and $e_2$.
Let $E_\mathcal{P}$ be the edges (not only of the current solution) that cross $\mathcal{P}$ but are not in $\mathcal{H}$.

There are two possibilities of making the solution legal:
\begin{itemize}
	\item $\mathcal{H}$ gets connected to outside (Subtour constraints gets valid, 2 edges outside). Splitting of hole in hole does not help. $\sum_{e\in E(V_\mathcal{H}, V\setminus V_\mathcal{H})}x_e\geq 2$ (1)
	\item The depth of $\mathcal{H}$ changes, resulting in a change of the active edges on the path $E_\mathcal{P}$. Thus either one of the edges in $E_\mathcal{P}\setminus\{e_1,e_2\}$ become active $\sum_{e\in E_\mathcal{P}\setminus\{e_1,e_2\}}x_e\geq 1$ (2) or $e_1$ or $e_2$ becomes inactive $x_{e_1}+x_{e_2}\leq 1$ (3). Please note that a depth change imply this but not the other way around.
\end{itemize}

We can express this by the following constraint:
\[\underbrace{\sum_{e\in E(V_\mathcal{H}, V\setminus V_\mathcal{H})}x_e}_{(1)} + \underbrace{\sum_{e\in E_\mathcal{P}\setminus\{e_1,e_2\}}x_e}_{(2)} + \underbrace{-x_{e_1}-x_{e_2}}_{(3)} \geq -1\]
This is a disjuction of the three above mentioned conditions.
For the current illegal solution the left hand side evaluates to $-2$ but the fulfillment of any of the three constraints increase the value and makes the constraint valid.
}

For an example, see Figure~\ref{fig:holeinholcut}; as illustrated in Figure~\ref{fig:GCwith/out}, this instance is problematic in the Basic IP.
This can we now be solved in one iteration.
The corresponding path is displayed in red and the two crossed edges are highlighted in green.
Changing the crossing of the path is more expensive than simply connect the hole in hole to the outer hole and thus the hole in hole dissolves.

\section{Experiments}
\label{sec:experiments}

\subsection{Implementation}
Our implementation uses CPLEX 
to solve the relevant IPs. 
Important is also the geometric side of computation, for which we used 
the CGAL Arrangements package~\cite{cgal:arr}.
CGAL represents a planar subdivision using a doubly connected edge list (DCEL), which is ideal for detecting invalid boundary cycles.

\subsection{Test Instances}

While the TSPLIB is well-recognized and offers a good mix of instances with different
structure (ranging from grid-like instances over relatively uniform random distribution
to highly clustered instances), it is relatively sparse. Observing that the
larger TSPLIB instances are all geographic in nature, we
designed a generic approach that yields arbitrarily large and numerous clustered instances.
This is based on illumination maps: A satellite image of a geographic region at night time
displays uneven light distribution.
 The corresponding brightness values can be used as a random density function that can be 
used for sampling (see Figure \ref{fig::generator}). To reduce noise, we cut off brightness values
below a certain threshold, i.e., we set the probability of choosing the respective 
pixels to zero.

\old{
\begin{figure}
        \subfigure[Earth by night \label{fig:earthByNight}] {\includegraphics[width=0.45\columnwidth]{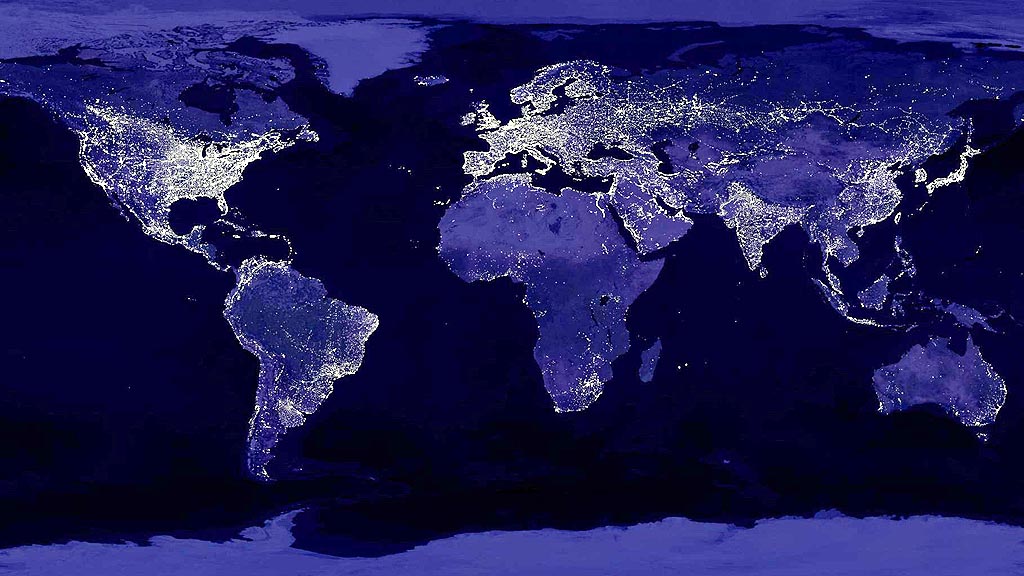}}
        \hfill
        \subfigure[A sampled instance]{\includegraphics[width=0.45\columnwidth]{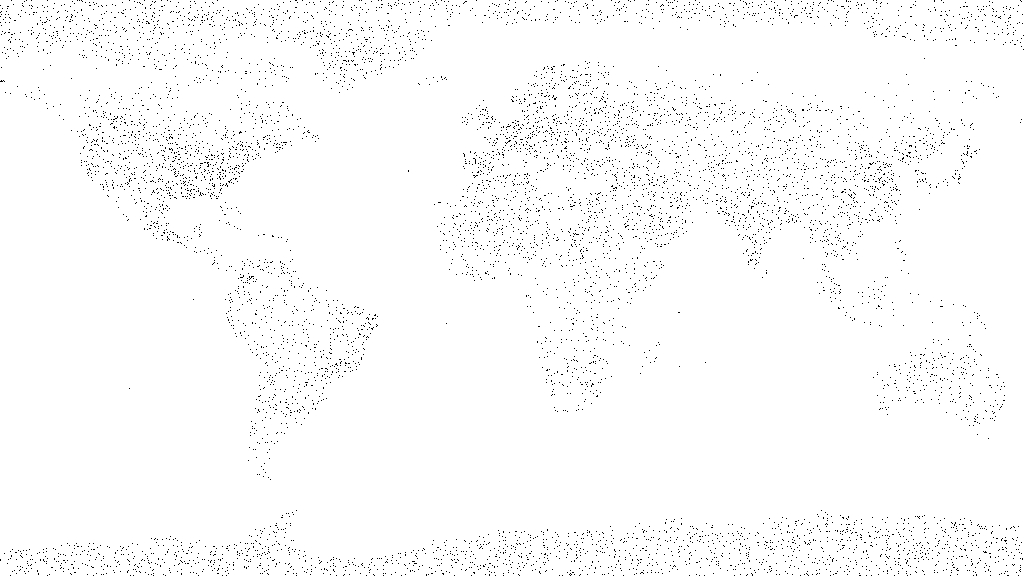}}
        \caption{Using the brightness of an image as a density function to generate clustered point sets.}
        \label{fig::generator}
\end{figure}
}

%

\subsection{Results}
All experiments were run on an \textit{Intel Core i7-4770} CPU clocked at 3.40 GHz with 16 GB of RAM.
We set a 30 minute time limit to solve the instances. In Table \ref{tab:TSPLib}, all
results are displayed for every instance 
that we solved within the time limit. The
largest instance solved within 30 minutes is gr666 with 666 points, which took about 6 minutes. The
largest instance solved out of the TSPLib so far is dsj1000 with 1000 points, solved in about 37 minutes.
In addition, we generated 30 instances for each size, which were run with a time limit of 30 minutes. 

    \begin{longtable}{rcccccc}
    	    	    \caption{The runtime in milliseconds of all variants on the instances of the TSPLib that we solved within 30 minutes. The number in the name of an instance indicates the number of points.}\\
    \toprule
    & BasicIP & +JS+DC & +JS+TC & +JS+DC & +JS+DC & +DC+TC \\
    &         & +TC+HIHC & +HIHC & +HIHC & +TC    & +HIHC \\
    \midrule
    \endfirsthead
    \multicolumn{7}{c}%
    {\tablename\ \thetable\ -- \textit{Continued from previous page}} \\
    \toprule
    & BasicIP & +JS+DC & +JS+TC & +JS+DC & +JS+DC & +DC+TC \\
    &         & +TC+HIHC & +HIHC & +HIHC & +TC    & +HIHC \\
    \midrule
    \endhead
    \hline \multicolumn{7}{r}{\textit{Continued on next page}} \\
    \endfoot
    \endlastfoot
    burma14 & 20    & 22    & 17    & 19    & 26    & 19 \\
    ulysses16 & 48    & 42    & 35    & 43    & 32    & 42 \\
    ulysses22 & 50    & 34    & 55    & 31    & 32    & 61 \\
    att48 & 180   & 58    & 72    & 62    & 57    & 129 \\
    eil51 & 74    & 82    & 72    & 78    & 81    & 99 \\
    berlin52 & 43    & 38    & 37    & 37    & 38    & 51 \\
    st70  & -     & 329   & 324   & -     & 348   & 414 \\
    eil76 & 714   & 144   & 105   & 530   & 148   & 239 \\
    pr76  & -     & 711   & 711   & -     & 731   & 1238 \\
    gr96  & 376   & 388   & 349   & 10982 & 384   & 367 \\
    rat99 & 922   & 480   & 485   & 464   & 513   & 1190 \\
    kroA100 & -     & 961   & 689   & -     & 950   & 1294 \\
    kroB100 & -     & 1470  & 2623  & -     & 1489  & 2285 \\
    kroC100 & -     & 470   & 431   & -     & 465   & 577 \\
    kroD100 & 4673  & 509   & 451   & 4334  & 514   & 835 \\
    kroE100 & -     & 273   & 273   & -     & 272   & 574 \\
    rd100 & -     & 894   & 756   & -     & 890   & 2861 \\
    eil101 & -     & 575   & 445   & -     & 527   & 1090 \\
    lin105 & -     & 390   & 359   & -     & 412   & 931 \\
    pr107 & 550   & 401   & 272   & 346   & 513   & 923 \\
    pr124 & 495   & 348   & 264   & 322   & 355   & 940 \\
    bier127 & 439   & 288   & 270   & 267   & 276   & 476 \\
    ch130 & -     & 1758  & 1802  & -     & 1594  & 2853 \\
    pr136 & 1505  & 964   & 1029  & 992   & 950   & 3001 \\
    gr137 & -     & 1262  & 1361  & -     & 1252  & 1724 \\
    pr144 & 6276  & 1028  & 2926  & 985   & 1030  & 2012 \\
    ch150 & -     & 4938  & 5167  & -     & 5867  & 7997 \\
    kroA150 & -     & 3427  & 5615  & -     & 3327  & 7474 \\
    kroB150 & -     & 2993  & 2396  & -     & 2943  & 5265 \\
    pr152 & 13285 & 2161  & 1619  & 10978 & 2151  & 19479 \\
    u159  & 13285 & 1424  & 1262  & 5339  & 1410  & 2513 \\
    rat195 & 106030 & 16188 & 19780 & 77216 & 16117 & 27580 \\
    d198  & -     & 19329 & 155550 & -     & 19398 & 41118 \\
    kroA200 & -     & 26360 & 13093 & -     & 26389 & 11844 \\
    kroB200 & -     & 5492  & 6239  & -     & 5525  & 15238 \\
    gr202 & -     & 4975  & 7512  & -     & 4304  & 9670 \\
    ts225 & 18902 & 7746  & 9750  & 7595  & 7603  & 60167 \\
    tsp225 & 91423 & 11600 & 9741  & 28756 & 11531 & 44297 \\
    pr226 & -     & 8498  & 2800  & -     & 7204  & 18848 \\
    gr229 & -     & 5462  & 26478 & -     & 10153 & 25674 \\
    gil262 & -     & 23000 & 22146 & -     & -     & 72772 \\
    pr264 & 24690 & 6537  & -     & 6719  & 6549  & 23641 \\
    a280  & 22023 & 3601  & 3857  & 3980  & 3619  & 12983 \\
    pr299 & -     & 16251 & 355323 & -     & 16173 & 85789 \\
    lin318 & -     & 23863 & 1511219 & -     & 24035 & 75312 \\
    linhp318 & -     & 23107 & 1313680 & -     & 23064 & 79352 \\
    rd400 & -     & 111128 & 92995 & -     &       & 302363 \\
    fl417 & -     & 198013 & -     & -     & 215210 & 825808 \\
    gr431 & -     & 56716 & 173609 & -     & 78133 & 265416 \\
    pr439 & -     & 46685 & 36592 & -     & 48231 & 273873 \\
    pcb442 & -     & 1356796 & -     & -     & -     & - \\
    d493  & -     & 359072 & -     & -     & -     & 837229 \\
    att532 & -     & 217679 & 256394 & -     & 218665 & 817096 \\
    ali535 & -     & 93771 & 427800 & -     & 91828 & 323104 \\
    u574  & -     & 371523 & 199114 & -     & -     & 1010276 \\
    rat575 & -     & 417494 & 191198 & -     & 580320 & 934988 \\
    p654  & -     & 864066 & -     & -     & -     & - \\
    d657  & -     & 455378 & 253374 & -     & 646148 & 1352747 \\
    gr666 & -     & 366157 & -     & -     & 670818 & - \\
    \bottomrule
    \label{tab:TSPLib}%
    \end{longtable}%

\begin{figure}
	\begin{center}
		\includegraphics[width=0.48\textwidth, trim= 0mm 40mm 0mm 0mm, clip]{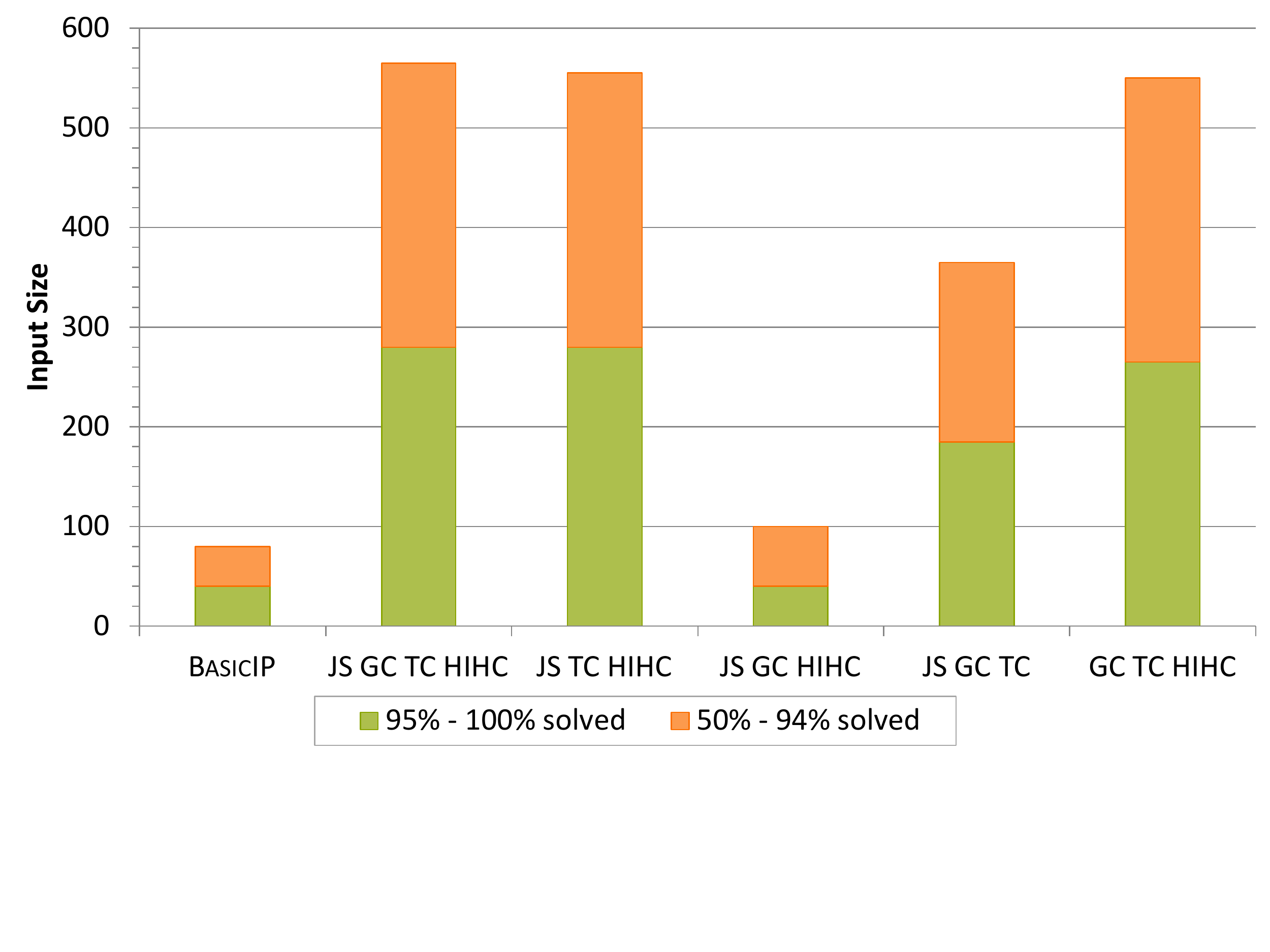}
		\includegraphics[width=0.48\textwidth, trim= 0mm 40mm 0mm 0mm, clip]{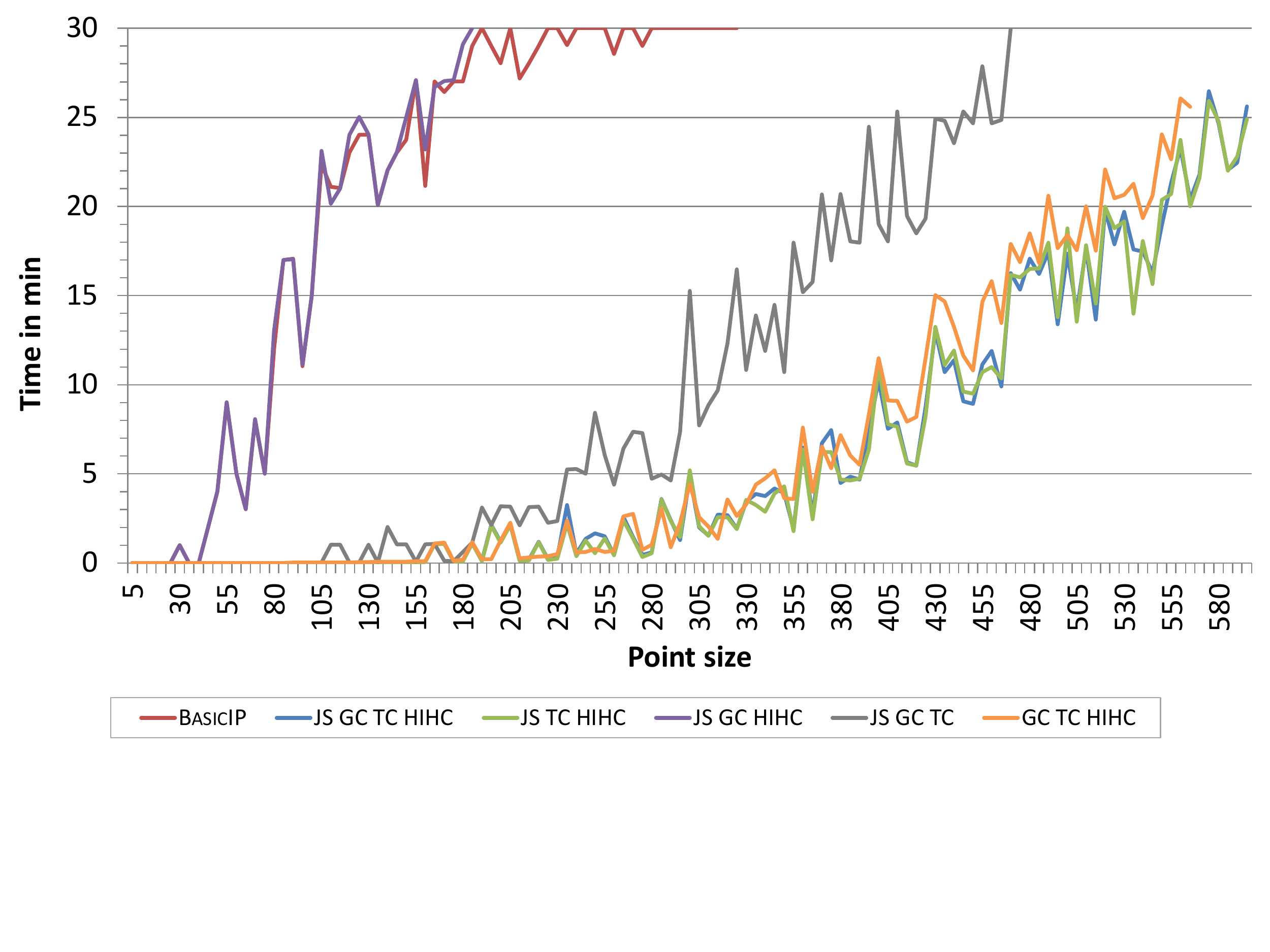}
		\caption{(Left) Success rate for the different variants of using of the cuts, with 30 instances for each input size ($y$-axis).
(Right) The average runtime of the different variants for all 30 instances. A non-solved instance is interpreted as 30 minutes runtime.}
		\label{fig:diagInst}
	\end{center}
\end{figure}

\old{
\begin{figure}
	\begin{center}
		\includegraphics[width=0.75\textwidth, trim= 0mm 40mm 0mm 0mm, clip]{fig/DiagrammRuntime.pdf}
		\caption{The average runtime of the different variants for all 30 instances. A non-solved instance is interpreted as 30 minutes runtime.}
		\label{fig:diagRuntime}
	\end{center}
\end{figure}
}

\begin{figure}
	\begin{center}
		\includegraphics[width=0.48\textwidth, trim= 0mm 40mm 0mm 0mm, clip]{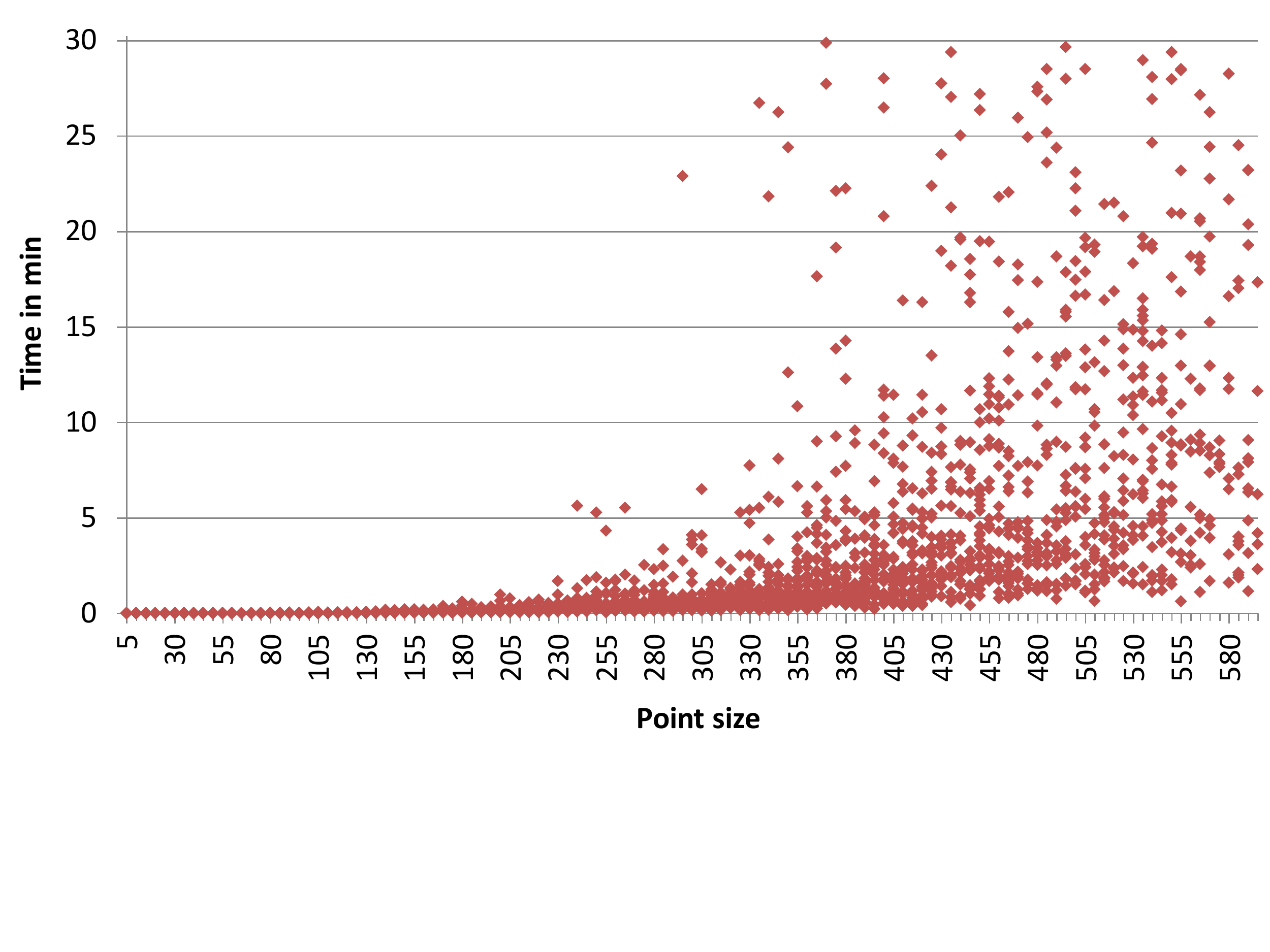}
		\includegraphics[width=0.48\textwidth, trim= 0mm 40mm 0mm 0mm, clip]{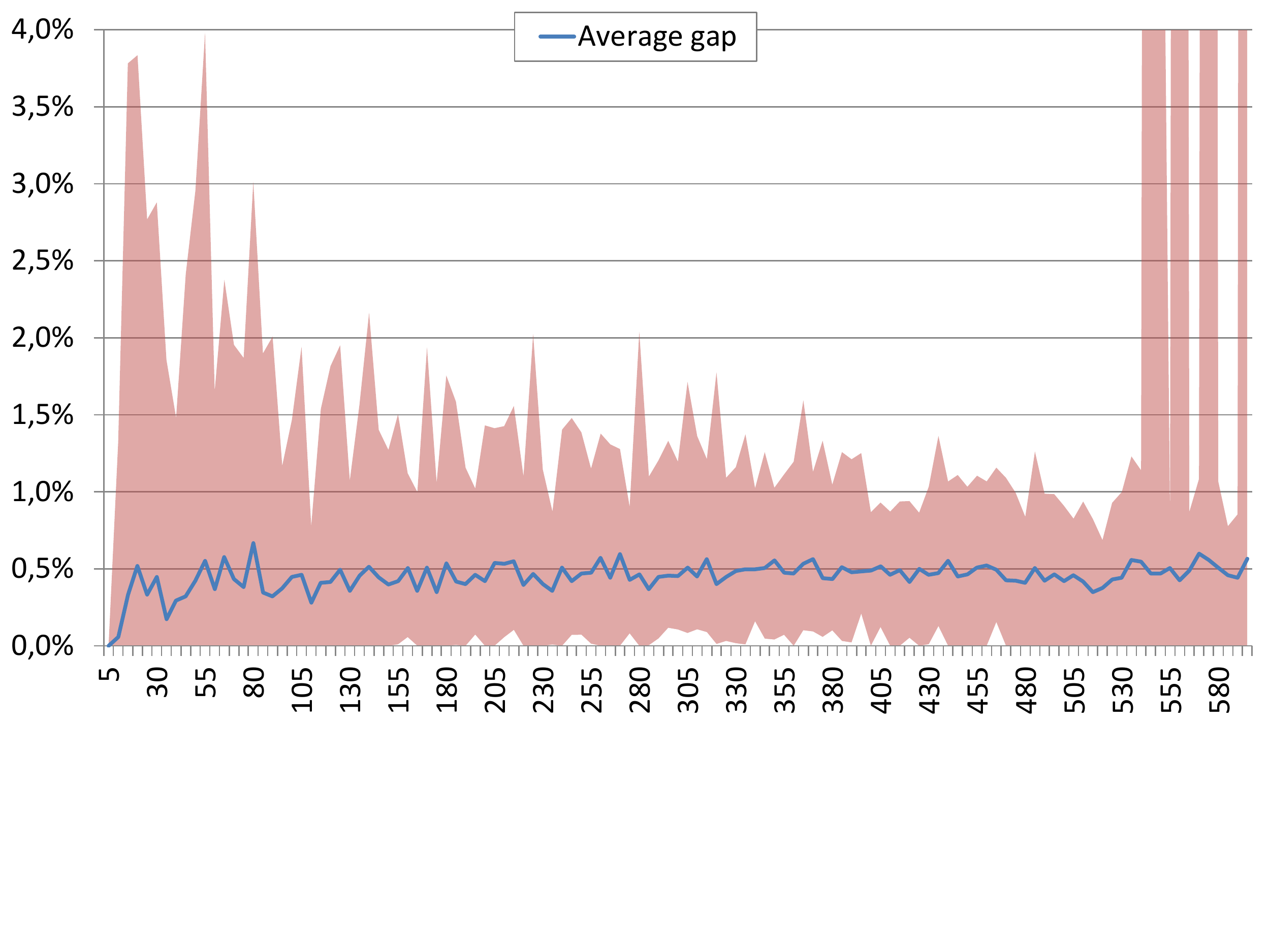}
		\caption{(Left) The distribution of the runtime within 30 minutes for the case of using the jumpstart, glue cuts, tail cuts and HiH-cuts.
		(Right) The relative gap of the value on the edges of the Delaunay triangulation to the optimal value. 
The red area marks the range between the minimal and maximal gap.}
		\label{fig:diagDistribution}
	\end{center}
\end{figure}

\begin{figure}
	\begin{center}
		\includegraphics[width=0.75\textwidth, trim= 0mm 40mm 0mm 0mm, clip]{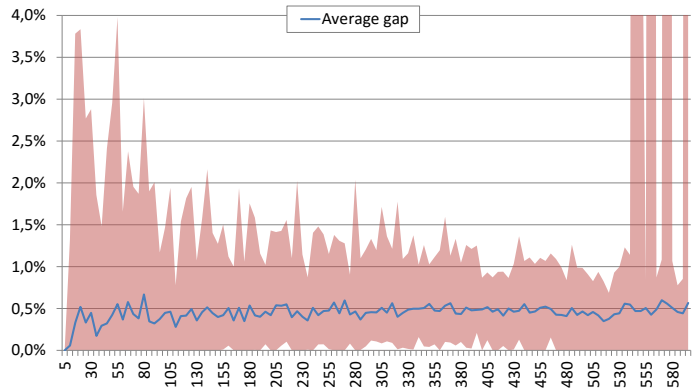}
		\caption{The relative gap of the value on the edges of the Delaunay triangulation to the optimal value. The red area marks the range between the minimal and maximal gap.}
		\label{fig:diagGapOptJS}
	\end{center}
\end{figure}

\begin{figure}
	\subfigure[Earth by night \label{fig:earthByNight}] {\includegraphics[width=0.48\columnwidth]{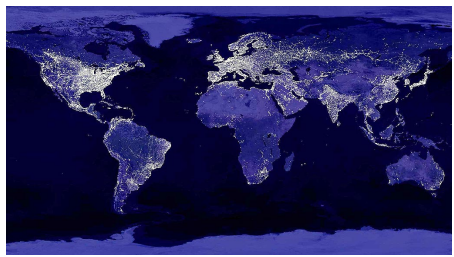}} 
	\hfill
	\subfigure[A sampled instance]{\includegraphics[width=0.48\columnwidth]{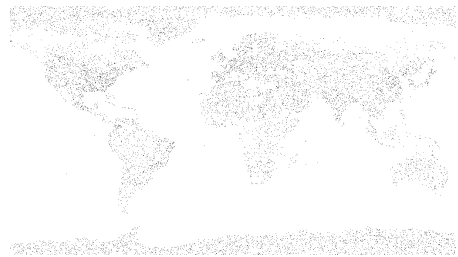}}
	\caption{Using a brightness map as a density function for generating clustered point sets.}
	\label{fig::generator}
\end{figure}

We observe that even without using glue cut and jumpstart, 
we are able to solve more than $50 \%$ of the
instances up to about 550 input points. Without the tail cuts, 
we hit a wall at 100 points, without the HiH-cut instances,
at about 370 input points; see Figure~\ref{fig:diagInst},
which also shows the average runtime of all 30 instances for all
variants. Instances exceeding the 30 minutes time limit are 
marked with a 30-minutes timestamp. The figure shows that using jumpstart shortens the runtime
significantly; using the glue cut is almost as fast as the variant
without the glue cut. 

Figure~\ref{fig:diagDistribution} shows that medium-sized instances 
(up to about 450 points) can be solved in under 5 minutes.
We also show 
that restricting the
edge set to the Delaunay triangulation edges yields solutions that are about
0.5\% worse on average than the optimal solution. Generally the solution of the
jumpstart gets very close to the optimal solution until about 530 points. After
that, for some larger instances, we get solutions on the edge set of the
Delaunay triangulation that are up to $50 \%$ worse than the optimal solution. 

\old{
\subsection{Clustered point sets}
So far we only considered random instances with an uniform distribution.
 As a source for instances possibly having clustered point sets, we have written a small tool to generate such instances from an arbitrary image. The brightness values of an image induce a density function that can be used for sampling (see Figure \ref{fig::generator}). To reduce noise, we cut off brightness values below a certain threshold, i.e., we set the probability to choose the resp. pixels to zero. 

\begin{figure}
	\subfigure[Earth by night \label{fig:earthByNight}] {\includegraphics[width=0.45\columnwidth]{fig/earth.jpg}} 
	\hfill
	\subfigure[A sampled instance]{\includegraphics[width=0.45\columnwidth]{fig/earth_10000.jpg}}
	\caption{Using the brightness of an image as a density function to generate clustered point sets.}
	\label{fig::generator}
\end{figure}

\ignore{\todo[inline]{Write out the following:}
\begin{itemize}
	\item 10 instances per point size, start 200 points, increased by 20 point until 1000.
	\item Instances generated by using Figure \ref{fig:earthByNight} as source, only bright areas used => instances are very similar and strongly clustered.
	\item Until 600 points almost all instances are solved. $>600$ only a few.
	\item Largest solved instance with 960 points. See Figure \ref{fig:960instance}.	
	\item TSPLib used, too. Largest instance gr666.tsp. All instances seen as they are in Euclidean space, not in their given geometry space. See the full version for details.
\end{itemize}}

Based on the satellite image from Figure \ref{fig:earthByNight}, we sampled point sets where points are only set at bright areas. This yields strongly clustered instances.
The point set sizes start at 200 points and increase by 20 points until we reach a total size of 1000 points. For each point set size we generated 10 instances.

Figure \ref{fig:diagSatellite} shows that up to a point set size of 600, almost all instances can be solved within 30 minutes. When dealing with 760 points and above, we can only solve up to $50\%$ of the instances. The last instance solved has a size of 960 points (see Figure \ref{fig:960instance}).
}

\old{
\begin{figure}
	\begin{center}
		\includegraphics[width=0.75\textwidth, trim= 0mm 45mm 0mm 0mm, clip]{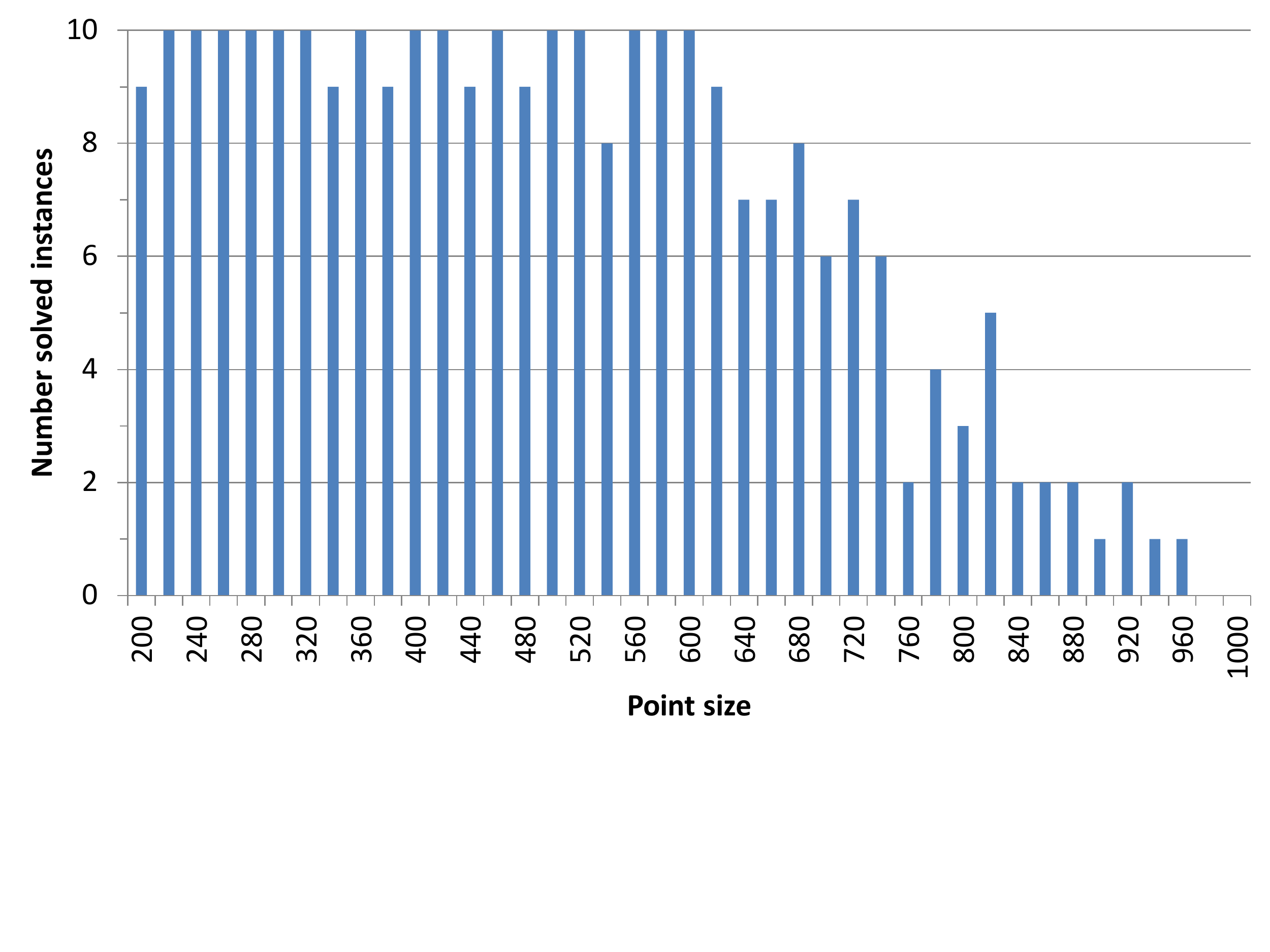}
		\caption{The number of instances which are solvable within 30 minutes. Per point size are 10 instances used.}
		\label{fig:diagSatellite}
	\end{center}
\end{figure}
}


\section{Conclusions}
\label{sec:conclusion}

As discussed in the introduction, considering general instead of simple
polygons corresponds to searching 
for a shortest cycle cover with a specific topological constraint:
one outside cycle surrounds a set of disjoint and unnested inner cycles. 
Clearly, this is only one example of
considering specific topological constraints. 
Our techniques and results should be applicable, after suitable
adjustments, to other constraints on the topology of cycles.
\old{Our geometric techniques for all aspects of the problem (complexity, approximation,
exact computation) should still be applicable, after suitable adjustments.}
We gave a 3-approximation for \MinBound; we expect that \MinBound has
a polynomial-time approximation scheme, base on PTAS
techniques~\cite{arora,mitchell} for geometric TSP, and we will
elaborate on this in the full paper.
\old{More specific aspects can also be expanded. We are optimistic that the
guillotine subdivision techniques by Mitchell can be applied to achieve a 
polynomial-time approximation scheme; as the outcome promises to be more
theoretical than practical, we have refrained from elaborating these ideas
at this time.}

There are also various practical aspects that can be explored further.
It will be interesting to evaluate the practical performance of the theoretical
approximation algorithm, not only from a practical perspective, but also
to gain some insight on whether the approximation factor of 3 can be tightened.
Pushing the limits of solvability can also be attempted, e.g., by using more advanced
techniques from the TSP context.
We can also consider sparsification techniques other than the Delaunay edges,
e.g., the union between the best known tour and the $k$-nearest-neighbor edge set $(k
\in \{2,5,10,20\})$ has been applied for TSP by Land \cite{land1979solution},
or (see Padberg and Rinaldi \cite{padberg1991branch}) by taking the union of $k$ tours acquired by Lin's and Kernighan's
heuristic algorithm \cite{lin1973effective}.

\section*{Acknowledgements}
We thank Stephan Friedrichs and Melanie Papenberg for helpful conversations.
Parts of this work were carried out at the 30th Bellairs Winter Workshop on Computational Geometry (Barbados) in 2015.
We thank the workshop participants and organizers, particularly Erik Demaine. 
Joseph~Mitchell is partially supported by NSF (CCF-1526406).
Irina~Kostitsyna is supported by the Netherlands Organisation for Scientific Research (NWO) under project no. 639.023.208.

\small 
\bibliographystyle{abbrv}

\bibliography{bibliography}


\end{document}